%% This is file `elsarticle-template-2-harv.tex',
%%
%% Copyright 2009 Elsevier Ltd
%%
%% This file is part of the 'Elsarticle Bundle'.
%% ---------------------------------------------
%%
%% It may be distributed under the conditions of the LaTeX Project Public
%% License, either version 1.2 of this license or (at your option) any
%% later version.  The latest version of this license is in
%%    http://www.latex-project.org/lppl.txt
%% and version 1.2 or later is part of all distributions of LaTeX
%% version 1999/12/01 or later.
%%
%% The list of all files belonging to the 'Elsarticle Bundle' is
%% given in the file `manifest.txt'.
%%
%% Template article for Elsevier's document class `elsarticle'
%% with harvard style bibliographic references
%%
%% $Id: elsarticle-template-2-harv.tex 155 2009-10-08 05:35:05Z rishi $
%% $URL: http://lenova.river-valley.com/svn/elsbst/trunk/elsarticle-template-2-harv.tex $
%%

\documentclass[preprint,authoryear,3p,11pt]{elsarticle}

%% Use the option review to obtain double line spacing
%% \documentclass[authoryear,preprint,review,12pt]{elsarticle}

%% Use the options 1p,twocolumn; 3p; 3p,twocolumn; 5p; or 5p,twocolumn
%% for a journal layout:
%% \documentclass[final,authoryear,1p,times]{elsarticle}
%% \documentclass[final,authoryear,1p,times,twocolumn]{elsarticle}
%% \documentclass[final,authoryear,3p,times]{elsarticle}
%% \documentclass[final,authoryear,3p,times,twocolumn]{elsarticle}
%% \documentclass[final,authoryear,5p,times]{elsarticle}
%% \documentclass[final,authoryear,5p,times,twocolumn]{elsarticle}

%% if you use PostScript figures in your article
%% use the graphics package for simple commands
%% \usepackage{graphics}
%% or use the graphicx package for more complicated commands
%% \usepackage{graphicx}
%% or use the epsfig package if you prefer to use the old commands
%% \usepackage{epsfig}

%% The amssymb package provides various useful mathematical symbols
\usepackage{amssymb}
%% The amsthm package provides extended theorem environments
\usepackage{amsthm}
%% The amsmath package provides a series of matrix layout environments
\usepackage{amsmath}

%% The lineno packages adds line numbers. Start line numbering with
%% \begin{linenumbers}, end it with \end{linenumbers}. Or switch it on
%% for the whole article with \linenumbers after \end{frontmatter}.
%% \usepackage{lineno}

%% natbib.sty is loaded by default. However, natbib options can be
%% provided with \biboptions{...} command. Following options are
%% valid:

%%   round  -  round parentheses are used (default)
%%   square -  square brackets are used   [option]
%%   curly  -  curly braces are used      {option}
%%   angle  -  angle brackets are used    <option>
%%   semicolon  -  multiple citations separated by semi-colon (default)
%%   colon  - same as semicolon, an earlier confusion
%%   comma  -  separated by comma
%%   authoryear - selects author-year citations (default)
%%   numbers-  selects numerical citations
%%   super  -  numerical citations as superscripts
%%   sort   -  sorts multiple citations according to order in ref. list
%%   sort&compress   -  like sort, but also compresses numerical citations
%%   compress - compresses without sorting
%%   longnamesfirst  -  makes first citation full author list
%%
%\biboptions{longnamesfirst,comma}

% \biboptions{}

\usepackage{mathrsfs}
\usepackage{float}
\usepackage{eufrak}
\usepackage{mathtools}
\usepackage{graphics}
\usepackage{amsfonts}
\usepackage{mathrsfs}
\usepackage{amscd}
\usepackage{color}
\usepackage{url}
\usepackage[plainpages=false,pdfpagelabels=true,colorlinks=true,citecolor=blue,hypertexnames=false]{hyperref}
\usepackage[ruled,vlined]{algorithm2e}
\usepackage{multirow,booktabs}

\journal{Multidimensional Systems and Signal Processing}

\newtheorem{theorem}{Theorem}[section]

\newtheorem{corollary}[theorem]{Corollary}
\newtheorem{lemma}[theorem]{Lemma}
\newtheorem{remark}[theorem]{Remark}
\newtheorem{definition}[theorem]{Definition}
\newtheorem{example}[theorem]{Example}
\newtheorem{question}[theorem]{Question}
\numberwithin{equation}{section}

\newcommand{\gr}{Gr\"{o}bner }
\newcommand{\z}{{\bf z}}

\begin{document}

\begin{frontmatter}

%% Title, authors and addresses

%% use the tnoteref command within \title for footnotes;
%% use the tnotetext command for the associated footnote;
%% use the fnref command within \author or \address for footnotes;
%% use the fntext command for the associated footnote;
%% use the corref command within \author for corresponding author footnotes;
%% use the cortext command for the associated footnote;
%% use the ead command for the email address,
%% and the form \ead[url] for the home page:
%%
%% \title{Title\tnoteref{label1}}
%% \tnotetext[label1]{}
%% \author{Name\corref{cor1}\fnref{label2}}
%% \ead{email address}
%% \ead[url]{home page}
%% \fntext[label2]{}
%% \cortext[cor1]{}
%% \address{Address\fnref{label3}}
%% \fntext[label3]{}

\title{Factorizations for a Class of Multivariate Polynomial Matrices}

%% use optional labels to link authors explicitly to addresses:
%% \author[label1,label2]{<author name>}
%% \address[label1]{<address>}
%% \address[label2]{<address>}

\author[klmm,ucas]{Dong Lu}
\ead{donglu@amss.ac.cn}
%\cortext[cor]{Corresponding author}

\author[klmm,ucas]{Dingkang Wang}
\ead{dwang@mmrc.iss.ac.cn}

\author[klmm,ucas]{Fanghui Xiao}
\ead{xiaofanghui@amss.ac.cn}

\address[klmm]{KLMM, Academy of Mathematics and Systems Science, Chinese Academy of Sciences, Beijing 100190, China}

\address[ucas]{School of Mathematical Sciences, University of Chinese Academy of Sciences, Beijing 100049, China}

\begin{abstract}
 Following the works by Lin et al. (Circuits Syst. Signal Process. 20(6): 601-618, 2001) and Liu et al. (Circuits Syst. Signal Process. 30(3): 553-566, 2011), we investigate how to factorize a class of multivariate polynomial matrices. The main theorem in this paper shows that an $l\times m$ polynomial matrix admits a factorization with respect to a polynomial if the polynomial and all the $(l-1)\times (l-1)$ reduced minors of the matrix generate the unit ideal. This result is a further generalization of previous works, and based on this, we give an algorithm which can be used to factorize more polonomial matrices. In addition, an illustrate example is given to show that our main theorem is non-trivial and valuable.
 \end{abstract}

\begin{keyword}
Multivariate polynomial matrices, Matrix factorizations, Reduced minors, Reduced \gr basis
\end{keyword}

\end{frontmatter}

\section{Introduction}\label{intro}

 The study of factorizations for multivariate polynomial matrices began with the development of multidimensional system theory in the late 1970s \citep{Youla1979Notes}, and the problem of matrix factorizations was considered to be one of the basic problems of this subject. Since then, great progress has been made on multivariate polynomial matrix factorizations.

 \cite{Bose1982} introduced some basic concepts of multivariate polynomial matrices and the problem of matrix factorizations. After that, \cite{Bose2003} presented factorization algorithms of bivariate polynomial matrices, and introduced the latest research trends of matrix factorizations with three or more variables. The factorization problem for bivariate polynomial matrices has been completely solved in \citep{Guiver1982Polynomial,Liu2013New,Morf1977New}, but for the cases of more than two variables is still open.

 \cite{Charoenlarpnopparut1999Multidimensional} used \gr bases of modules to compute zero prime matrix factorizations of multivariate polynomial matrices. For some polynomial matrices with special properties, \cite{Lin1999Notes,Lin2001Further} proposed some methods to compute zero prime matrix factorizations of matrices. Meanwhile, \cite{Lin2001A} presented the Lin-Bose's conjecture: a matrix admits a zero prime matrix factorization if its all maximal reduced minors generate the unit ideal. This conjecture was proved in \citep{Liu2014The,Pommaret2001Solving,Wang2004On}, so the problem of zero prime matrix factorizations have been completely solved. Subsequently, \cite{Mingsheng2005On} put forward an algorithm based on module theory to solve the problem of minor prime matrix factorizations. \cite{Guan2018,Guan2019} studied the problem of factor prime matrix factorizations under the condition that matrices are not of full rank, and they generalized the main results in \citep{Mingsheng2005On}. So far, some achievements in \citep{Liu2010Notes,Liu2015Further,Mingsheng2007On} have been made on the problem of factor prime matrix factorizations. Although the problems of zero prime matrix factorizations and minor prime matrix factorizations have been completely solved, the problem of factor prime matrix factorizations remains to be studied.

 Let $k[\z]$ and $k[\z_2]$ be the ring of polynomials in variables $z_1,z_2,\ldots,z_n$ and $z_2,\ldots,z_n$ with coefficients in an algebraically closed field $k$, respectively. Let $\mathbf{F}$ be an $l\times m$ polynomial matrix with entries in $k[\z]$ and $l\leq m$, $d_l(\mathbf{F})$ be the greatest common divisor of all the $l\times l$ minors of $\mathbf{F}$, and $d = z_1 - f(\z_2)$ be a divisor of $d_l(\mathbf{F})$, where $f(\z_2)\in k[\z_2]$. \cite{Lin2001Factorizations} proved that $\mathbf{F}$ admits a matrix factorization with respect to $d$ if for each $(\z_2)\in k^{1\times (n-1)}$ the rank of $\mathbf{F}(f(\z_2),\z_2)$ is $(l-1)$. Moreover, they proposed a constructive algorithm to factorize this class of multivariate polynomial matrices. \cite{Liu2011On} focused on the relationship between $d$ and all the $(l-1)\times (l-1)$ minors of $\mathbf{F}$, and showed that $\mathbf{F}$ admits a matrix factorization with respect to $d$ if $d$ and all the $(l-1)\times (l-1)$ minors of $\mathbf{F}$ generate $k[\z]$. They proved that their main theorem is a generalization of the result in \citep{Lin2001Factorizations}. However, we find that there are still many of multivariate polynomial matrices that can be factorized with respect to $d$ without satisfying the main theorem in \citep{Liu2011On}. This implies that it would be significant to generalize the theorems and algorithms in \citep{Lin2001Factorizations,Liu2011On}.

 In this paper, we still study the condition under which $\mathbf{F}$ admits a matrix factorization with respect to $d$. We focus on the relationship between $d$ and all the $(l-1)\times (l-1)$ reduced minors of $\mathbf{F}$, and prove that $\mathbf{F}$ admits a matrix factorization with respect to $d$ if $d$ and all the $(l-1)\times (l-1)$ reduced minors of $\mathbf{F}$ generate the unit ideal. Compared with the main theorems in \citep{Lin2001Factorizations,Liu2011On}, our main theorem has a wider range of applications. Combining our main theorem and the constructive algorithm in \citep{Lin2001Factorizations}, we obtain the matrix factorization algorithm.

 This paper is organized as follows. In Section \ref{sec2}, we outline some knowledge about multivariate polynomial matrix factorizations and propose a problem that we shall consider. Main theorem and some generalizations are presented in Section \ref{sec3} to help us summarize which types of polynomial matrices can be factorized. The matrix factorization algorithm is given in Section \ref{sec4}, and an example is given to illustrate the calculation process of the algorithm. Further remarks are provided in Section \ref{sec_conclusions}.

\section{Preliminaries and Problem}\label{sec2}

 In the following, we denote by $k$ an algebraically closed field, $\z$ the $n$ variables $z_1,z_2,\ldots,z_n$, $\z_2$ the $(n-1)$ variables $z_2,\ldots,z_n$, where $n\geq 3$. Let $k[\z]$ and $k[\z_2]$ be the ring of polynomials in variables $\z$ and $\z_2$ with coefficients in $k$, respectively, $k(\z)$ be the fraction field of $k[\z]$, and $k[\z]^{l\times m}$ be the set of $l\times m$ matrices with entries in $k[\z]$. Without loss of generality, we assume that $l\leq m$, and for convenience we use uppercase bold letters to denote polynomial matrices.

 Throughout this paper, the argument $(\z)$ is omitted whenever its omission does not cause confusion. For any given $\mathbf{F}\in k[\z]^{l\times m}$ and $f(\z_2)\in k[\z_2]$, $\mathbf{F}^{{\rm T}}$ represents the transposed matrix of $\mathbf{F}$, and $\mathbf{F}(f(\z_2),\z_2)$ denotes an $l\times m$ polynomial matrix in $k[\z_2]^{l\times m}$, which is formed by transforming $z_1$ in $\mathbf{F}$ into $f(\z_2)$. If $l = m$, we denote by ${\rm det}(\mathbf{F})$ the determinant of $\mathbf{F}$, and if $\mathbf{F}$ is of full rank, we use $\mathbf{F}^{-1}\in k(\z)^{l\times l}$ to stand for the inverse matrix of $\mathbf{F}$. Assume that $f_1,\ldots,f_s\in k[\z]$, we use $\langle f_1,\ldots,f_s \rangle$ to denote the ideal generated by $f_1,\ldots,f_s$ in $k[\z]$. Let $f,g \in k[\z]$, then $f \mid g$ means that $f$ is a divisor of $g$. In addition, ``w.r.t." and ``GCD'' stand for ``with respect to" and ``greatest common divisor", respectively.

\subsection{Previous Works}\label{sec2-1}

 We first introduce two basic concepts in matrix theory.

\begin{definition}\label{def_minor}
 Let $\mathbf{F} \in k[\z]^{l\times m}$, and given $2r$ positive integers arbitrarily such that $1\leq i_1 < \cdots < i_r \leq l$ and  $1\leq j_1 < \cdots < j_r \leq m$. Let $\mathbf{F}\begin{pmatrix}\begin{smallmatrix}
 i_1\cdots i_r \\j_1\cdots j_r\end{smallmatrix}\end{pmatrix}$ denotes an $r\times r$ matrix consisting of the $i_1,\ldots,i_r$ rows and $j_1,\ldots,j_r$ columns of $\mathbf{F}$, then ${\rm det}\Bigl(\mathbf{F}\begin{pmatrix}\begin{smallmatrix}i_1\cdots i_r \\j_1\cdots j_r\end{smallmatrix}\end{pmatrix}\Bigr)$ is called an $r\times r$ minor of $\mathbf{F}$.
\end{definition}

\begin{definition}\label{def_rank}
 Let $\mathbf{F} \in k[\z]^{l\times m}$, the rank of $\mathbf{F}$ is $r$ $(1\leq r \leq l)$ if there exists a nonzero $r\times r$ minor of $\mathbf{F}$, and all the $i\times i$ $ (i> r)$ minors of $\mathbf{F}$ vanish identically. For convenience, we denote the rank of $\mathbf{F}$ by ${\rm rank}(\mathbf{F})$.
\end{definition}

 The following lemma is a generalization of Binet-Cauchy formula in \citep{Strang2010Linear}.

\begin{lemma}\label{binet-cauchy}
 Let $\mathbf{F}=\mathbf{G}_1\mathbf{F}_1\in k[\z]^{l\times m}$, where $\mathbf{G}_1\in k[\z]^{l\times l}$ and $\mathbf{F}_1 \in k[\z]^{l\times m}$. Then an $r\times r$ $(r\leq l)$ minor of $\mathbf{F}$ is
 \begin{equation}\label{binet-cauchy-equation-1}
 {\rm det}\Bigl(\mathbf{F}\begin{pmatrix}\begin{smallmatrix}
  i_1\cdots i_r \\ j_1\cdots j_r \end{smallmatrix}\end{pmatrix}\Bigr) \\  =  \sum_{1\leq s_1<\cdots<s_r\leq l}{\rm det}\Bigl(\mathbf{G}_1\begin{pmatrix}\begin{smallmatrix}
  i_1\cdots i_r \\ s_1\cdots s_r
  \end{smallmatrix}\end{pmatrix}\Bigr)\cdot {\rm det}\Bigl(\mathbf{F}_1\begin{pmatrix}\begin{smallmatrix}
  s_1\cdots s_r \\ j_1\cdots j_r\end{smallmatrix}\end{pmatrix}\Bigr).
 \end{equation}
 In particular, when $r = l$, we have
 \begin{equation}\label{binet-cauchy-equation-2}
  {\rm det}\Bigl(\mathbf{F}\begin{pmatrix}\begin{smallmatrix}
    1~\cdots ~l \\ j_1\cdots j_l
    \end{smallmatrix}\end{pmatrix}\Bigr) = {\rm det}(\mathbf{G}_1)\cdot {\rm det}\Bigl(\mathbf{F}_1\begin{pmatrix}\begin{smallmatrix}
    1~\cdots ~l \\j_1\cdots j_l\end{smallmatrix}\end{pmatrix}\Bigr).
 \end{equation}
\end{lemma}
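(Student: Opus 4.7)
The plan is to reduce this ``generalized'' Binet--Cauchy formula to the classical one by localizing the matrix product to the selected rows and columns. The key observation is that the $r\times r$ submatrix of $\mathbf{F}$ picked out by rows $i_1<\cdots<i_r$ and columns $j_1<\cdots<j_r$ factors as a product of an $r\times l$ block of $\mathbf{G}_1$ and an $l\times r$ block of $\mathbf{F}_1$.

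First I would let $\mathbf{A}$ denote the $r\times l$ submatrix of $\mathbf{G}_1$ consisting of rows $i_1,\ldots,i_r$ and let $\mathbf{B}$ denote the $l\times r$ submatrix of $\mathbf{F}_1$ consisting of columns $j_1,\ldots,j_r$. A direct entry-by-entry check using the definition of matrix multiplication shows
\[
 \mathbf{F}\begin{pmatrix}\begin{smallmatrix} i_1\cdots i_r \\ j_1\cdots j_r\end{smallmatrix}\end{pmatrix} \;=\; \mathbf{A}\,\mathbf{B},
\]
because the $(a,b)$-entry of the left side is the $(i_a,j_b)$-entry of $\mathbf{F}=\mathbf{G}_1\mathbf{F}_1$, which equals $\sum_{s=1}^{l}(\mathbf{G}_1)_{i_a,s}(\mathbf{F}_1)_{s,j_b}$, exactly the $(a,b)$-entry of $\mathbf{A}\mathbf{B}$.

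Next I would invoke the classical Binet--Cauchy formula for the product of an $r\times l$ and an $l\times r$ matrix (with $r\leq l$):
\[
 \det(\mathbf{A}\mathbf{B}) \;=\; \sum_{1\leq s_1<\cdots<s_r\leq l}\det\!\bigl(\mathbf{A}(:,s_1\cdots s_r)\bigr)\cdot\det\!\bigl(\mathbf{B}(s_1\cdots s_r,:)\bigr).
\]
Translating back, $\det(\mathbf{A}(:,s_1\cdots s_r))$ is precisely $\det\bigl(\mathbf{G}_1\begin{pmatrix}\begin{smallmatrix}i_1\cdots i_r\\ s_1\cdots s_r\end{smallmatrix}\end{pmatrix}\bigr)$ and $\det(\mathbf{B}(s_1\cdots s_r,:))$ is $\det\bigl(\mathbf{F}_1\begin{pmatrix}\begin{smallmatrix}s_1\cdots s_r\\ j_1\cdots j_r\end{smallmatrix}\end{pmatrix}\bigr)$, which yields \eqref{binet-cauchy-equation-1}. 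The special case $r=l$ is then immediate: the only admissible index tuple is $(s_1,\ldots,s_l)=(1,\ldots,l)$, so the sum collapses to a single term giving \eqref{binet-cauchy-equation-2}, which is just the standard multiplicativity of the determinant on the square block.

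There is essentially no obstacle here; the statement is purely formal. The only thing to be careful about is the bookkeeping of indices when identifying the submatrix of a product with the product of submatrices, and making sure we are citing the classical Binet--Cauchy formula in the non-square form $r\times l$ times $l\times r$ rather than the square form. Since the coefficients live in the commutative polynomial ring $k[\mathbf{z}]$, no issues arise from passing between the scalar statement of Binet--Cauchy and the polynomial-entry version.
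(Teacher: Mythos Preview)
Your argument is correct. The paper does not actually supply a proof of this lemma: it is stated as a citation (``a generalization of Binet--Cauchy formula in \citep{Strang2010Linear}'') and then used as a black box in the proofs of Theorem~\ref{theorem_new_2}. So there is no proof in the paper to compare against; your reduction to the classical $r\times l$ times $l\times r$ Binet--Cauchy identity via the factorization $\mathbf{F}\begin{pmatrix}\begin{smallmatrix} i_1\cdots i_r \\ j_1\cdots j_r\end{smallmatrix}\end{pmatrix}=\mathbf{A}\mathbf{B}$ is exactly the standard justification and fills the gap cleanly.
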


 To make the description simpler, we use the notations and concepts in the paper \citep{Lin1988On}.

\begin{definition}
 For any given $\mathbf{F}\in k[\z]^{l\times m}$,
 \begin{enumerate}
    \item let $d_l(\mathbf{F})$ and $d_{l-1}(\mathbf{F})$ be the GCD of all the $l\times l$ minors and all the $(l-1)\times (l-1)$ minors of $\mathbf{F}$, respectively;

    \item let $a_1,\ldots,a_\eta \in k[\z]$ be all the $l\times l$ minors of $\mathbf{F}$, where $\eta = \binom m{l}$, and extracting $d_l(\mathbf{F})$ from $a_1,\ldots,a_\eta$ yields
        $$a_i = d_l(\mathbf{F})b_i,~~i=1,\ldots,\eta,$$
        then $b_1,\ldots,b_\eta$ are called the $l\times l$ reduced minors of $\mathbf{F}$;

    \item let $c_1,\ldots,c_\gamma \in k[\z]$ be all the $(l-1)\times (l-1)$ minors of $\mathbf{F}$, where $\gamma = \binom {l}{l-1} \cdot \binom {m}{l-1}$, and extracting $d_{l-1}(\mathbf{F})$ from $c_1,\ldots,c_\gamma$ yields
        $$c_i = d_{l-1}(\mathbf{F})h_i,~~i=1,\ldots,\gamma,$$
        then $h_1,\ldots,h_\gamma$ are called the $(l-1)\times (l-1)$ reduced minors of $\mathbf{F}$.
 \end{enumerate}
\end{definition}

 Now, we introduce two important lemmas in matrix theory.

\begin{lemma}[\cite{Lin1993On,Lin1999On}]\label{RM_relation-1}
 Let $\mathbf{F}_1=[\mathbf{F}_{11},\mathbf{F}_{12}] \in k[\z]^{l\times (m+l)}$ be of full row rank and $\mathbf{F}_2=[\mathbf{F}_{21}^{{\rm T}},-\mathbf{F}_{22}^{{\rm T}}]^{{\rm T}}$ $\in k[\z]^{(m+l)\times m}$ be of full column rank, where $\mathbf{F}_{11},\mathbf{F}_{22}\in k[\z]^{l\times m}$, $\mathbf{F}_{12} \in k[\z]^{l\times l}$ and $\mathbf{F}_{21} \in k[\z]^{m\times m}$. If $\mathbf{F}_1\mathbf{F}_2=\mathbf{0}_{l\times m}$, then ${\rm det}(\mathbf{F}_{12})\neq 0$ if and only if ${\rm det}(\mathbf{F}_{21})\neq 0$.
\end{lemma}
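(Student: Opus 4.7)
The plan is to reinterpret everything over the fraction field $k(\mathbf{z})$, where full row/column rank becomes the triviality of the (left/right) kernel and a nonzero determinant becomes honest invertibility. The hypothesis $\mathbf{F}_1\mathbf{F}_2=\mathbf{0}$ expanded as a block product gives
\[
\mathbf{F}_{11}\mathbf{F}_{21}-\mathbf{F}_{12}\mathbf{F}_{22}=\mathbf{0},\qquad\text{i.e.}\qquad \mathbf{F}_{11}\mathbf{F}_{21}=\mathbf{F}_{12}\mathbf{F}_{22},
\]
which will be the single algebraic identity driving both implications.

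For the forward direction, assume ${\rm det}(\mathbf{F}_{12})\neq 0$. Supposing, for contradiction, that ${\rm det}(\mathbf{F}_{21})=0$, I would pick a nonzero column vector $\mathbf{v}\in k(\mathbf{z})^{m\times 1}$ in the kernel of $\mathbf{F}_{21}$. Multiplying the boxed identity by $\mathbf{v}$ on the right gives $\mathbf{F}_{12}\mathbf{F}_{22}\mathbf{v}=\mathbf{F}_{11}\mathbf{F}_{21}\mathbf{v}=\mathbf{0}$; since $\mathbf{F}_{12}$ is invertible over $k(\mathbf{z})$, this forces $\mathbf{F}_{22}\mathbf{v}=\mathbf{0}$. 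Hence the block vector $(\mathbf{F}_{21}\mathbf{v}, -\mathbf{F}_{22}\mathbf{v})^{\rm T}=\mathbf{F}_2\mathbf{v}=\mathbf{0}$ with $\mathbf{v}\neq\mathbf{0}$, contradicting the full column rank of $\mathbf{F}_2$.

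For the reverse direction, I would run the dual argument with left kernels. Assuming ${\rm det}(\mathbf{F}_{21})\neq 0$ but ${\rm det}(\mathbf{F}_{12})=0$, pick a nonzero row vector $\mathbf{u}^{\rm T}\in k(\mathbf{z})^{1\times l}$ with $\mathbf{u}^{\rm T}\mathbf{F}_{12}=\mathbf{0}$. Then $\mathbf{u}^{\rm T}\mathbf{F}_{11}\mathbf{F}_{21}=\mathbf{u}^{\rm T}\mathbf{F}_{12}\mathbf{F}_{22}=\mathbf{0}$, and invertibility of $\mathbf{F}_{21}$ yields $\mathbf{u}^{\rm T}\mathbf{F}_{11}=\mathbf{0}$. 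Concatenating, $\mathbf{u}^{\rm T}\mathbf{F}_1=[\mathbf{u}^{\rm T}\mathbf{F}_{11},\mathbf{u}^{\rm T}\mathbf{F}_{12}]=\mathbf{0}$ with $\mathbf{u}^{\rm T}\neq \mathbf{0}$, contradicting the full row rank of $\mathbf{F}_1$.

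There isn't really a hard step here; the only conceptual point is the (well-known) equivalence between ``full rank over $k[\mathbf{z}]$'' (some maximal minor nonzero, as in Definition \ref{def_rank}) and ``the $k(\mathbf{z})$-linear map is injective/surjective in the appropriate direction,'' which lets us freely pass between polynomial identities and linear-algebra kernel arguments. Once that is taken for granted, both directions reduce to the one-line kernel-transfer displayed above.
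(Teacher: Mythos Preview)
Your argument is correct: passing to the fraction field $k(\mathbf{z})$ and running the two symmetric kernel-transfer arguments is exactly the right way to prove this, and every step checks out. Note that the paper does not actually supply a proof of this lemma---it is quoted from \cite{Lin1993On,Lin1999On} as a known result---so there is no ``paper's own proof'' to compare against; your self-contained argument is the standard one.
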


\begin{lemma}[\cite{Lin1988On}]\label{RM_relation-2}
 Assume that $\mathbf{F}_{12}^{-1}\mathbf{F}_{11}=\mathbf{F}_{22}\mathbf{F}_{21}^{-1}$, where $\mathbf{F}_{11},\mathbf{F}_{22} \in k[\z]^{l\times m}$, $\mathbf{F}_{12}^{-1}\in k(\z)^{l\times l}$ and $\mathbf{F}_{21}^{-1}\in k(\z)^{m\times m}$. Let $\bar{p}_1,\ldots,\bar{p}_{\xi_1}$ be all the $l\times l$ reduced minors of $[\mathbf{F}_{11},\mathbf{F}_{12}]$, and $p_1,\ldots,p_{\xi_2}$ be all the $m\times m$ reduced minors of $[\mathbf{F}_{21}^{{\rm T}},-\mathbf{F}_{22}^{{\rm T}}]^{{\rm T}}$, where $\xi_1 = \binom {m+l}{l} = \xi_2 = \binom {m+l}{m}$. Then, $\bar{p}_i=\pm p_i$ for $i=1,\ldots,\xi_1$, and the sign depends on the index $i$.
\end{lemma}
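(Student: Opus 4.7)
The plan is to exploit the hypothesis by introducing the common matrix $\mathbf{G}:=\mathbf{F}_{12}^{-1}\mathbf{F}_{11}=\mathbf{F}_{22}\mathbf{F}_{21}^{-1}\in k(\z)^{l\times m}$, so that $\mathbf{F}_{11}=\mathbf{F}_{12}\mathbf{G}$ and $\mathbf{F}_{22}=\mathbf{G}\mathbf{F}_{21}$. Both matrices of interest then admit the factorizations
$$[\mathbf{F}_{11},\mathbf{F}_{12}]=\mathbf{F}_{12}\cdot[\mathbf{G},\mathbf{I}_l],\qquad [\mathbf{F}_{21}^{\rm T},-\mathbf{F}_{22}^{\rm T}]^{\rm T}=\begin{pmatrix}\mathbf{I}_m\\ -\mathbf{G}\end{pmatrix}\mathbf{F}_{21},$$
through $\mathbf{G}$, and a direct check shows the two factored matrices annihilate each other. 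Applying Lemma \ref{binet-cauchy} (the $r=l$ case) to the first factorization shows that for every $l$-subset $S\subset\{1,\ldots,m+l\}$ the corresponding $l\times l$ minor $a_S$ of $[\mathbf{F}_{11},\mathbf{F}_{12}]$ equals $\det(\mathbf{F}_{12})$ times $\det([\mathbf{G},\mathbf{I}_l][S])$; transposing the second factorization and applying the same lemma shows that for every $m$-subset $T$ the corresponding $m\times m$ minor $b_T$ of $[\mathbf{F}_{21}^{\rm T},-\mathbf{F}_{22}^{\rm T}]^{\rm T}$ equals $\det(\mathbf{F}_{21})$ times $\det\bigl(\begin{pmatrix}\mathbf{I}_m\\ -\mathbf{G}\end{pmatrix}[T]\bigr)$.

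The heart of the proof, and the step I expect to demand the most care, is the duality
$$\det([\mathbf{G},\mathbf{I}_l][S])=\epsilon_S\cdot\det\Bigl(\begin{pmatrix}\mathbf{I}_m\\ -\mathbf{G}\end{pmatrix}[S^c]\Bigr),$$
valid for every $l$-subset $S$ with complement $S^c$, where $\epsilon_S\in\{\pm 1\}$ depends only on $S$ and not on $\mathbf{G}$. This is a classical Laplace-expansion identity: each determinant, expanded along its identity block, collapses to the same minor of $\mathbf{G}$ on the complementary indices, and the two resulting signs differ by a fixed factor coming from the permutation that interleaves $S$ with $S^c$ together with the $(-1)^{|S\cap\{1,\ldots,m\}|}$ contributed by $-\mathbf{G}$. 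Specializing to $S_0=\{m+1,\ldots,m+l\}$, where both sides collapse to $\det(\mathbf{I})=1$, pins down $\epsilon_{S_0}=+1$.

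Combining the three displays yields the polynomial identity
$$\det(\mathbf{F}_{21})\cdot a_S=\epsilon_S\cdot\det(\mathbf{F}_{12})\cdot b_{S^c}\qquad(|S|=l).$$
Writing $a_S=d_A\bar p_S$ and $b_T=d_B p_T$ with $d_A,d_B$ the GCDs of the respective maximal-minor families, and noting that at $S=S_0$ one has $a_{S_0}=\det(\mathbf{F}_{12})$ and $b_{S_0^c}=\det(\mathbf{F}_{21})$, the identity rearranges to $p_{S_0^c}\,\bar p_S=\epsilon_S\,\bar p_{S_0}\,p_{S^c}$. Taking the GCD over $S$ of both sides and invoking that the families $\{\bar p_S\}$ and $\{p_T\}$ are each coprime forces $\bar p_{S_0}$ and $p_{S_0^c}$ to be associate in $k[\z]$, i.e.\ to differ only by a nonzero scalar. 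Since the reduced minors are defined only up to a unit, absorbing this scalar into the chosen representative of the GCD produces $\bar p_S=\epsilon_S\,p_{S^c}=\pm p_{S^c}$. Pairing each $l$-subset of columns with its complementary $m$-subset of rows gives the claimed indexing $\bar p_i=\pm p_i$.
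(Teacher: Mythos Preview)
The paper does not supply a proof of this lemma at all: it is quoted verbatim from \cite{Lin1988On} and used as a black box in the proof of Lemma~\ref{lemma_new}. So there is no ``paper's own proof'' to compare against.

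That said, your argument is sound and is essentially the classical one. The factorizations $[\mathbf{F}_{11},\mathbf{F}_{12}]=\mathbf{F}_{12}[\mathbf{G},\mathbf{I}_l]$ and $[\mathbf{F}_{21}^{\rm T},-\mathbf{F}_{22}^{\rm T}]^{\rm T}=[\mathbf{I}_m^{\rm T},-\mathbf{G}^{\rm T}]^{\rm T}\mathbf{F}_{21}$ are correct, the Binet--Cauchy step is valid over $k(\z)$ and produces polynomial identities since all four minors involved lie in $k[\z]$, and the duality $\det([\mathbf{G},\mathbf{I}_l][S])=\epsilon_S\det\bigl([\mathbf{I}_m^{\rm T},-\mathbf{G}^{\rm T}]^{\rm T}[S^c]\bigr)$ is precisely the Pl\"ucker duality between a subspace and its annihilator (equivalently, the Laplace expansion you sketch). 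The final normalization step---absorbing the unit $u$ arising from $p_{S_0^c}=u\,\bar p_{S_0}$ into the choice of GCD---is the right way to resolve the inherent unit ambiguity in the definition of reduced minors; the lemma as stated in the paper is implicitly making such a choice. One small point worth tightening in a final write-up: you should note explicitly that $\det(\mathbf{F}_{12})\neq 0$ and $\det(\mathbf{F}_{21})\neq 0$ (this is part of the hypothesis, since the inverses exist in $k(\z)$), so that the cancellation of $d_A d_B$ in passing from $d_B p_{S_0^c}\cdot d_A\bar p_S=\epsilon_S\, d_A\bar p_{S_0}\cdot d_B p_{S^c}$ to $p_{S_0^c}\bar p_S=\epsilon_S\,\bar p_{S_0}p_{S^c}$ is legitimate in the integral domain $k[\z]$.
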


 The general matrix factorization problem is now formulated as follows.

\begin{definition}\label{matrix_factorization}
 Let $\mathbf{F}\in k[\z]^{l\times m}$ and $d\in k[\z]$ be a divisor of $d_l(\mathbf{F})$. We say that $\mathbf{F}$ admits a matrix factorization w.r.t. $d$ if $\mathbf{F}$ can be factorized as
 $$\mathbf{F} = \mathbf{G}_1\mathbf{F}_1$$
 such that $\mathbf{G}_1\in k[\z]^{l\times l}$, $\mathbf{F}_1\in k[\z]^{l\times m}$, and ${\rm det}(\mathbf{G}_1) = d$.
\end{definition}

Next we recall the concept of zero left prime matrix from multidimensional systems theory.

\begin{definition}
 Let $\mathbf{F}\in k[\z]^{l\times m}$ be of full row rank. If all the $l\times l$ minors of $\mathbf{F}$ generate $k[\z]$, then $\mathbf{F}$ is said to be a zero left prime (ZLP) matrix.
\end{definition}

 In Definition \ref{matrix_factorization} if $\mathbf{F}_1$ is a ZLP matrix, then we say that $\mathbf{F}$ admits a ZLP matrix factorization. Let $I$ be an ideal generated by all the $l\times l$ minors of $\mathbf{F}$, then we can compute the reduced \gr basis $\mathcal{G}$ of $I$ w.r.t. a term order to check $I=k[\z]$. That is, if $\mathcal{G} = \{ 1\}$, then $I = k[\z]$. The definition of reduced \gr basis and how to compute a reduced \gr basis of an ideal can be found in \citep{buchberger1965algorithmus,Cox2007Ideals}.

 \cite{serre1955faisceaux} raised the question whether any finitely generated projective module over a polynomial ring is free. This question was solved positively and independently by \cite{Quillen1976Projective} and \cite{Suslin1976Projective}, and the result is called Quillen-Suslin theorem. For Quillen-Suslin theorem, there are two descriptions as follows.

\begin{lemma} \label{QS-1}
 If $\mathbf{w}\in k[\z]^{1 \times l}$ is a {\rm ZLP} vector, then the set $\mathbf{M}\subset k[\z]^{l\times 1}$ constructed by all solutions $\mathbf{q}\in k[\z]^{l\times 1}$ of $\mathbf{w}\mathbf{q}=0$ is free.
\end{lemma}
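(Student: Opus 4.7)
The plan is to realize the solution set $\mathbf{M}$ as the kernel of a surjection onto the free module $k[\z]$, identify $\mathbf{M}$ as a projective (in fact stably free) $k[\z]$-module, and then invoke the Quillen--Suslin theorem — finitely generated projective modules over $k[z_1,\ldots,z_n]$ are free — to conclude.

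First I would exploit the ZLP hypothesis. Since the entries $w_1,\ldots,w_l$ of $\mathbf{w}$ are the $1\times 1$ minors of a ZLP matrix, they generate the unit ideal, so there exist $v_1,\ldots,v_l\in k[\z]$ with $\sum_{i=1}^l w_iv_i = 1$. Consequently the $k[\z]$-linear map
$$\phi: k[\z]^{l\times 1}\longrightarrow k[\z],\qquad \mathbf{q}\longmapsto \mathbf{w}\mathbf{q},$$
is surjective, and by definition $\mathbf{M}=\ker\phi$. Because $k[\z]$ is Noetherian, $\mathbf{M}$ is finitely generated.

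Next I would use the short exact sequence $0\to \mathbf{M}\to k[\z]^{l\times 1}\to k[\z]\to 0$. The map $r\mapsto r\cdot[v_1,\ldots,v_l]^{{\rm T}}$ is a right inverse to $\phi$, so the sequence splits and yields $k[\z]^{l\times 1}\cong \mathbf{M}\oplus k[\z]$. Hence $\mathbf{M}$ is stably free, and in particular projective.

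Finally, the Quillen--Suslin theorem applied to the finitely generated projective $k[\z]$-module $\mathbf{M}$ asserts that $\mathbf{M}$ is free, which is the desired conclusion. The only substantive ingredient is the invocation of Quillen--Suslin itself; all the other steps are standard linear and homological bookkeeping. In this sense the lemma is essentially a reformulation of Quillen--Suslin in the language of polynomial syzygies, and the hard part — the inductive patching argument over the number of variables — is precisely what Quillen and Suslin supplied.
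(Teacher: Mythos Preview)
Your argument is correct. The paper does not actually prove this lemma; it simply states it as one of two equivalent formulations of the Quillen--Suslin theorem and cites \cite{Quillen1976Projective,Suslin1976Projective} and \cite{Fabianska2006Applications} for the result and its algorithmic realization. Your proposal supplies the standard (and correct) reduction that the paper leaves implicit: showing that the syzygy module $\mathbf{M}$ is a finitely generated stably free module via the split exact sequence $0\to \mathbf{M}\to k[\z]^{l\times 1}\to k[\z]\to 0$, and then invoking Quillen--Suslin to conclude freeness. So your approach is exactly in line with the paper's intent, only more explicit.
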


\begin{lemma} \label{QS-2}
 If $\mathbf{w}\in k[\z]^{1 \times l}$ is a {\rm ZLP} vector, then an unimodular matrix $\mathbf{U}\in k[\z]^{l\times l}$ can be constructed such that $\mathbf{w}$ is its first row.
\end{lemma}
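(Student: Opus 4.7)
My plan is to deduce Lemma \ref{QS-2} from Lemma \ref{QS-1} together with the definition of a ZLP row vector: ZLP-ness of $\mathbf{w}$ produces a right inverse column, while Lemma \ref{QS-1} produces a free basis of $\ker(\mathbf{w})$. I will glue these together into a unimodular matrix $\mathbf{V}$ whose inverse $\mathbf{U} = \mathbf{V}^{-1}$ has $\mathbf{w}$ as its first row.

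First I would exploit the ZLP hypothesis. Since the entries $w_1, \ldots, w_l$ of $\mathbf{w}$ generate $k[\z]$ as an ideal, there exist $v_1, \ldots, v_l \in k[\z]$ with $\sum_{i=1}^{l} w_i v_i = 1$, yielding a column $\mathbf{v} = (v_1, \ldots, v_l)^{{\rm T}}$ with $\mathbf{w}\mathbf{v} = 1$. Applying Lemma \ref{QS-1} tells us that $\mathbf{M} = \{\mathbf{q} \in k[\z]^{l \times 1} : \mathbf{w}\mathbf{q} = 0\}$ is a free $k[\z]$-module. Its rank must be $l-1$, since the short exact sequence
$$0 \longrightarrow \mathbf{M} \longrightarrow k[\z]^{l \times 1} \xrightarrow{\mathbf{w}} k[\z] \longrightarrow 0$$
splits through $\mathbf{v}$, forcing $k[\z]^{l \times 1} \cong \mathbf{M} \oplus k[\z]$. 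Choose a free basis $\mathbf{q}_1, \ldots, \mathbf{q}_{l-1}$ of $\mathbf{M}$.

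Next I would assemble $\mathbf{V} = [\mathbf{v}, \mathbf{q}_1, \ldots, \mathbf{q}_{l-1}] \in k[\z]^{l \times l}$ and verify that $\mathbf{V}$ is unimodular. The splitting identity
$$\mathbf{x} = \mathbf{v}(\mathbf{w}\mathbf{x}) + \bigl(\mathbf{x} - \mathbf{v}(\mathbf{w}\mathbf{x})\bigr), \qquad \mathbf{x} \in k[\z]^{l \times 1},$$
writes every $\mathbf{x}$ as a $k[\z]$-linear combination of $\mathbf{v}$ and an element of $\mathbf{M}$ (the second summand is annihilated by $\mathbf{w}$), hence of the columns of $\mathbf{V}$. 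Thus multiplication by $\mathbf{V}$ surjects $k[\z]^{l \times 1}$ onto itself, so there exists $\mathbf{A} \in k[\z]^{l \times l}$ with $\mathbf{V}\mathbf{A} = I$; taking determinants gives $\det(\mathbf{V})\det(\mathbf{A}) = 1$, so $\det(\mathbf{V})$ is a unit in $k[\z]$.

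Finally, set $\mathbf{U} = \mathbf{V}^{-1} \in k[\z]^{l \times l}$, which is again unimodular. Since by construction $\mathbf{w}\mathbf{V} = (1, 0, \ldots, 0)$, multiplying on the right by $\mathbf{U}$ gives $\mathbf{w} = (1, 0, \ldots, 0)\,\mathbf{U}$, which is precisely the first row of $\mathbf{U}$, as required. The main obstacle in this plan is Step 2: the rank count and, more importantly, the actual production of a free basis of the projective syzygy module $\mathbf{M}$, which is exactly the Quillen--Suslin content already absorbed into Lemma \ref{QS-1}. The remaining work --- finding $\mathbf{v}$ by ideal membership and verifying unimodularity via the right-inverse trick --- is routine once that input is in hand.
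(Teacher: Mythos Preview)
Your proof is correct. Note that the paper does not actually prove Lemma~\ref{QS-2}; it records it as one form of the Quillen--Suslin theorem and cites \cite{Logar1992Algorithms,Lu2017,park1995,Youla1984The} for constructions. However, the explicit construction the paper sketches in Section~\ref{sec4} (below Algorithm~\ref{MF_Algorithm}) is precisely your argument: find $\mathbf{q}_1$ with $\mathbf{w}\mathbf{q}_1=1$, take a free basis $\mathbf{q}_2,\ldots,\mathbf{q}_l$ of ${\rm Syz}_r(\mathbf{w})$ via Lemma~\ref{QS-1}, set $\mathbf{V}=[\mathbf{q}_1,\ldots,\mathbf{q}_l]$, and put $\mathbf{U}=\mathbf{V}^{-1}$. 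The only difference is that the paper outsources the unimodularity of $\mathbf{V}$ to Theorem~4.4 of \cite{Lu2017}, whereas you supply a clean self-contained justification via the splitting $\mathbf{x}=\mathbf{v}(\mathbf{w}\mathbf{x})+(\mathbf{x}-\mathbf{v}(\mathbf{w}\mathbf{x}))$ and the right-inverse/determinant trick.
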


 In Lemma \ref{QS-1}, $\mathbf{M}$ is called the syzygy module of $\mathbf{w}$. \cite{Fabianska2006Applications} gave an algorithm to compute free bases of free modules over polynomial rings, and the algorithm was implemented in QuillenSuslin package \citep{QS-program}. In Lemma \ref{QS-2}, $\mathbf{U}$ is an unimodular matrix if and only if ${\rm det}(\mathbf{U})$ is a nonzero constant in $k$. There are many methods to construct $\mathbf{U}$ such that $\mathbf{w}$ is its first row, we refer to \citep{Logar1992Algorithms,Lu2017,park1995,Youla1984The} for more details.

\subsection{Problem}\label{sec2-2}

 In order to raise the problem we are going to consider, let us first introduce the works in \citep{Lin2001Factorizations} and \citep{Liu2011On}.

\begin{lemma}[\cite{Lin2001Factorizations}]\label{theorem_Lin}
 Let $\mathbf{F}\in k[\z]^{l\times m}$, and $d = z_1 - f(\z_2)$ be a common divisor of $a_1,\ldots,a_\eta$, i.e., $a_i = de_i$ with $e_i\in k[\z]$ $(i=1,\ldots, \eta)$. If $\langle d,e_1,\ldots,e_\eta \rangle = k[\z]$, then ${\rm rank}(\mathbf{F}(f(\z_2),
 \z_2))=l-1$ for every $(\z_2)\in k^{1\times (n-1)}$ and $\mathbf{F}$ admits a matrix factorization w.r.t. $d$.
\end{lemma}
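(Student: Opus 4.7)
My plan is to establish the two assertions in turn: first that $\mathbf{F}_0(\z_2) := \mathbf{F}(f(\z_2),\z_2)$ has pointwise rank $l-1$, and then, using that rank fact over $k[\z_2]$, to produce a unimodular matrix that strips off the factor $d$. Every $l\times l$ minor of $\mathbf{F}_0(\z_2)$ is $a_i(f(\z_2),\z_2) = 0$, so $\mathrm{rank}(\mathbf{F}_0(\z_2)) \leq l-1$. For the reverse inequality I would differentiate $a_i = d\cdot e_i$ with respect to $z_1$ and evaluate at $z_1 = f(\z_2)$, obtaining $e_i(f(\z_2),\z_2) = (\partial_{z_1} a_i)(f(\z_2),\z_2)$. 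Expanding $\partial_{z_1} a_i$ by the row-wise differentiation rule for determinants exhibits it as a $k[\z]$-linear combination of $(l-1)\times(l-1)$ minors of $\mathbf{F}$; hence $e_i(f(\z_2),\z_2)$ lies in the ideal $J\subset k[\z_2]$ generated by the $(l-1)\times(l-1)$ minors of $\mathbf{F}_0$. Restricting the hypothesis $\langle d,e_1,\ldots,e_\eta\rangle = k[\z]$ to the subvariety $z_1 = f(\z_2)$ forces $J = k[\z_2]$, so by Hilbert's Nullstellensatz these minors have no common zero in $k^{n-1}$ and $\mathrm{rank}(\mathbf{F}_0(\z_2)) = l-1$ everywhere.

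Now I would work over $R = k[\z_2]$. Because the $(l-1)\times(l-1)$ minors of $\mathbf{F}_0$ generate $R$ while its $l\times l$ minors vanish, the Fitting-ideal criterion identifies $\mathrm{coker}(\mathbf{F}_0\colon R^m\to R^l)$ as a projective $R$-module of rank one; by the Quillen--Suslin theorem it is free of rank one. A free generator of the cokernel corresponds to a surjection $R^l\twoheadrightarrow R$ given by a row vector $\mathbf{w}\in R^{1\times l}$ whose entries generate $R$ (so $\mathbf{w}$ is ZLP) and which satisfies $\mathbf{w}\,\mathbf{F}_0 = 0$. Lemma \ref{QS-2} then extends $\mathbf{w}$ to a unimodular matrix $\mathbf{U}\in R^{l\times l}$ with first row $\mathbf{w}$, and after rescaling the first row we may arrange $\det(\mathbf{U}) = 1$.

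Viewing $\mathbf{U}$ inside $k[\z]^{l\times l}$ as constant in $z_1$, the first row of $\mathbf{U}\,\mathbf{F}$ vanishes at $z_1 = f(\z_2)$, so the factor theorem over $k[\z_2]$ tells us that $d = z_1 - f(\z_2)$ divides every entry of that row. Thus $\mathbf{U}\,\mathbf{F} = \mathrm{diag}(d,1,\ldots,1)\,\mathbf{F}_1$ for some $\mathbf{F}_1\in k[\z]^{l\times m}$, and setting $\mathbf{G}_1 = \mathbf{U}^{-1}\mathrm{diag}(d,1,\ldots,1)$ gives the required factorization $\mathbf{F} = \mathbf{G}_1\mathbf{F}_1$ with $\det(\mathbf{G}_1) = d$. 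The main obstacle I anticipate is the rank lower bound --- specifically, the cofactor-expansion step that places $e_i|_{z_1=f(\z_2)}$ inside the ideal generated by the $(l-1)\times(l-1)$ minors of $\mathbf{F}_0$ --- after which the Fitting/Quillen--Suslin package and the unimodular-completion lemma make the rest formal.
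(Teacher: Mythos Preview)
Your argument is correct. Note, however, that the paper does not supply its own proof of this lemma: it is quoted verbatim from \cite{Lin2001Factorizations} as background, so there is no in-paper proof to compare against directly. What the paper \emph{does} prove is the generalization (Theorem~\ref{theorem_new_1}), and the final third of your argument---extend the ZLP vector $\mathbf{w}$ to a unimodular $\mathbf{U}$ via Lemma~\ref{QS-2}, observe that the first row of $\mathbf{U}\mathbf{F}$ vanishes on $z_1=f(\z_2)$, invoke Lemma~\ref{lemma_zero} to extract $d$, and set $\mathbf{G}_1=\mathbf{U}^{-1}\mathrm{diag}(d,1,\ldots,1)$---matches that proof exactly.

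The two earlier steps take a genuinely different route from the paper's toolkit. For the rank lower bound you differentiate $a_i=d\,e_i$ in $z_1$ and use the cofactor expansion of $\partial_{z_1}\det$ to place $e_i(f(\z_2),\z_2)$ inside the ideal of $(l-1)\times(l-1)$ minors of $\mathbf{F}_0$; this is a clean, self-contained trick that the paper never uses (its rank arguments in Theorem~\ref{theorem_new_1} proceed via the coprimality condition $\gcd(d,d_{l-1}(\mathbf{F}))=1$, which is not part of the present hypothesis). For the existence of the ZLP annihilator $\mathbf{w}$, you appeal to the Fitting-ideal characterization of projectivity plus Quillen--Suslin on the cokernel, whereas the paper's machinery (Lemma~\ref{lemma_new}) argues by hand with syzygies and the reduced-minor identities of Lemmas~\ref{RM_relation-1}--\ref{RM_relation-2}. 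Your homological packaging is shorter and more conceptual; the paper's approach is more elementary and stays closer to explicit minor computations, which also feeds directly into the algorithm in Section~\ref{sec4}.
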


 \cite{Liu2011On} proved that ${\rm rank}(\mathbf{F}
 (f(\z_2),\z_2))=l-1$ for every $(\z_2)\in k^{1\times (n-1)}$ if and only if $\langle d,c_1,\ldots,c_\gamma \rangle = k[\z]$. Therefore, they generalized Lemma \ref{theorem_Lin} and obtained the following result.

\begin{lemma}[\cite{Liu2011On}]\label{theorem_Liu}
 Let $\mathbf{F}\in k[\z]^{l\times m}$, and $d = z_1 - f(\z_2)$ be a divisor of $d_l(\mathbf{F})$. If $\langle d,c_1,\ldots,c_\gamma \rangle = k[\z]$, then ${\rm rank}(\mathbf{F}(f(\z_2),
 \z_2))=l-1$ for every $(\z_2)\in k^{1\times (n-1)}$ and $\mathbf{F}$ admits a matrix factorization w.r.t. $d$.
\end{lemma}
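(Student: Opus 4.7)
The plan is to prove Lemma~\ref{theorem_Liu} in two stages: first establish the pointwise rank condition ${\rm rank}(\mathbf{F}(f(\z_2),\z_2)) = l-1$ for every $\z_2 \in k^{n-1}$, and then use that condition to extract the factor $d$.

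For the rank stage, I will exploit the fact that $d \mid d_l(\mathbf{F})$ forces every $l \times l$ minor $a_i$ to be divisible by $d$, so $a_i(f(\z_2),\z_2) = 0$ identically; hence ${\rm rank}(\mathbf{F}(f(\z_2),\z_2)) \le l-1$. For the reverse inequality I will use the hypothesis $\langle d, c_1, \ldots, c_\gamma\rangle = k[\z]$: writing $1 = r(\z) d + \sum_{i=1}^{\gamma} s_i(\z) c_i(\z)$ and substituting $z_1 = f(\z_2)$ yields $1 = \sum_{i=1}^{\gamma} s_i(f(\z_2),\z_2)\, c_i(f(\z_2),\z_2)$ in $k[\z_2]$, so the $(l-1) \times (l-1)$ minors of $\mathbf{F}(f(\z_2),\z_2)$ generate $k[\z_2]$ and in particular cannot all vanish at any point $\z_2 \in k^{n-1}$. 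This gives rank exactly $l-1$.

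For the factorization stage, the pointwise rank condition on $\mathbf{F}(f(\z_2),\z_2)$ means there exists a nonzero row vector in its left nullspace; the identity above shows that a suitable vector can be chosen so that its entries together with the $(l-1) \times (l-1)$ minors (used as Laplace cofactors) form a ZLP vector $\mathbf{w}(\z_2) \in k[\z_2]^{1 \times l}$ satisfying $\mathbf{w}(\z_2)\,\mathbf{F}(f(\z_2),\z_2) = \mathbf{0}$. Applying Lemma~\ref{QS-2} over $k[\z_2]$, I extend $\mathbf{w}(\z_2)$ to a unimodular matrix $\mathbf{U}(\z_2) \in k[\z_2]^{l \times l}$ having $\mathbf{w}(\z_2)$ as its first row. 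Then in $\mathbf{U}(\z_2)\,\mathbf{F}(\z)$ the first row vanishes at $z_1 = f(\z_2)$ and therefore is divisible by $d = z_1 - f(\z_2)$. Writing
\begin{equation*}
\mathbf{U}(\z_2)\,\mathbf{F}(\z) = {\rm diag}(d, 1, \ldots, 1)\,\mathbf{F}_1(\z),
\end{equation*}
for some $\mathbf{F}_1 \in k[\z]^{l\times m}$, I obtain $\mathbf{F} = \mathbf{G}_1 \mathbf{F}_1$ with $\mathbf{G}_1 = \mathbf{U}(\z_2)^{-1}\,{\rm diag}(d,1,\ldots,1)$ and $\det(\mathbf{G}_1) = \det(\mathbf{U}(\z_2))^{-1}\cdot d$, which is $d$ up to a nonzero scalar (absorbed into $\mathbf{U}$).

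The main obstacle I expect is the construction of the ZLP left-null vector $\mathbf{w}(\z_2)$: merely having rank $l-1$ everywhere only guarantees a left nullspace of rank one, but promoting a generator to a ZLP vector requires that the cofactor entries have no common zero, which is exactly what the identity $\sum s_i(f(\z_2),\z_2)\,c_i(f(\z_2),\z_2) = 1$ supplies when the left-null direction is taken (up to sign) as the vector of signed $(l-1)\times(l-1)$ minors of $\mathbf{F}(f(\z_2),\z_2)$ with one row deleted. Once $\mathbf{w}(\z_2)$ is ZLP, the Quillen--Suslin completion and the subsequent lifting to $k[\z]$ are routine, so this verification is where the proof must be executed carefully.
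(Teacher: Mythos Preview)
The paper does not give its own proof of Lemma~\ref{theorem_Liu}; it is quoted from \cite{Liu2011On}. However, the paper's proof of the more general Theorem~\ref{theorem_new_1} follows precisely your two-stage plan (establish ${\rm rank}(\hat{\mathbf F})=l-1$, then construct a ZLP left-null vector, complete it to a unimodular $\mathbf U$ via Lemma~\ref{QS-2}, and factor out $d$ using Lemma~\ref{lemma_zero}), so your outline matches the paper's method.

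The soft spot is your construction of the ZLP null vector. Your phrase ``the vector of signed $(l-1)\times(l-1)$ minors of $\mathbf{F}(f(\z_2),\z_2)$ with one row deleted'' does not produce an $l$-vector: deleting a row leaves an $(l-1)\times m$ matrix with $\binom{m}{l-1}$ minors, not $l$. What actually yields a left-null vector is to fix $l-1$ \emph{columns} (an $l\times(l-1)$ submatrix of full column rank) and take the $l$ signed minors obtained by deleting each of the $l$ rows in turn. But then the entries of this cofactor vector are only $l$ of the $\gamma$ minors $\hat c_i$, so the identity $1=\sum_i s_i(f(\z_2),\z_2)\,\hat c_i$ does \emph{not} by itself show those $l$ entries have no common zero. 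The missing step, carried out in the paper as Lemma~\ref{lemma_new} (using Lemma~\ref{RM_relation-2}), is that the cofactor vector coming from \emph{every} choice of $l-1$ columns is a $k[\z_2]$-multiple of one primitive null vector $\mathbf w=(w_1,\ldots,w_l)$; hence every $\hat c_i$ lies in $\langle w_1,\ldots,w_l\rangle$, and $\langle \hat c_1,\ldots,\hat c_\gamma\rangle=k[\z_2]$ forces $\langle w_1,\ldots,w_l\rangle=k[\z_2]$. Once you insert that argument, the remainder of your proposal goes through verbatim.
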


 In the following, let $d = z_1 - f(\z_2)$ with $f(\z_2)\in k[\z_2]$. According to Lemma \ref{theorem_Lin} and Lemma \ref{theorem_Liu}, we construct two sets of multivariate polynomial matrices:
 \begin{equation*}
   \left\{
     \begin{array}{ll}
       \mathcal{S}_1 & := \{\mathbf{F}\in k[\z]^{l\times m} : d \mid d_l(\mathbf{F})  \text{ and } \langle d,e_1,\ldots,e_\eta \rangle = k[\z]\}, \\
       \mathcal{S}_2 & := \{\mathbf{F}\in k[\z]^{l\times m} : d \mid d_l(\mathbf{F})  \text{ and } \langle d,c_1,\ldots,c_\gamma \rangle = k[\z]\}.
     \end{array}
   \right.
 \end{equation*}
 Then, we have $\mathcal{S}_1 \subset \mathcal{S}_2$ and $\mathbf{F}\in \mathcal{S}_2$ admits a matrix factorization w.r.t. $d$. Example 1 in the Section 4 of \citep{Lin2001Factorizations} shows that $\mathcal{S}_1$ is not empty, and Example 4.1 in the Section 4 of \citep{Liu2011On} shows that $\mathcal{S}_1 \subsetneqq \mathcal{S}_2$.

 Lemma \ref{theorem_Liu} tell us that for any given $\mathbf{F}\in \mathcal{S}_2$, ${\rm rank}(\mathbf{F}(f(\z_2),
 \z_2))=l-1$. This implies that ${\rm GCD}(d,d_{l-1}(\mathbf{F}))=1$. Otherwise, it follows from $d$ is an irreducible polynomial that ${\rm GCD}(d,d_{l-1}(\mathbf{F}))$ $=d$, then $c_i(f(\z_2),\z_2)=0~
 (i=1,\ldots,\gamma)$ and ${\rm rank}(\mathbf{F}(f(\z_2),
 \z_2))<l-1$, which leads to a contradiction. Now, we construct a new set of multivariate polynomial matrices:
 $$\mathcal{S} :=\{\mathbf{F}\in k[\z]^{l\times m} : d \mid d_l(\mathbf{F}) \text{ and } {\rm GCD}(d,d_{l-1}(\mathbf{F}))=1\}.$$
 Then, $\emptyset \neq \mathcal{S}_1 \subsetneqq \mathcal{S}_2 \subset \mathcal{S}$. As we know, $d_{l-1}(\mathbf{F})$ is the GCD of $c_1,\ldots,c_\gamma$, then we have
 $$\langle d,c_1,\ldots,c_\gamma \rangle \subseteq \langle d,d_{l-1}(\mathbf{F}) \rangle \subseteq k[\z].$$
 Therefore, it follows that $\langle d,c_1,\ldots,c_\gamma \rangle \neq k[\z]$ if $\langle d,d_{l-1}(\mathbf{F}) \rangle \neq k[\z]$. Although ${\rm GCD}(d,d_{l-1}(\mathbf{F}))=1$ for $\mathbf{F}\in \mathcal{S}$, $d$ and $d_{l-1}(\mathbf{F})$ may have common zeros. Next, we give an example to show that there exits $\mathbf{F}\in \mathcal{S}\setminus \mathcal{S}_2$ such that $\mathbf{F}$ admits a matrix factorization w.r.t. $d$.

\begin{example}\label{example-1}
 Let
 \begin{small}
  \[\mathbf{F}=\begin{bmatrix}
  z_1z_2-z_1-z_2^2-z_2z_3  &  z_1z_3+z_1-z_2z_3-z_2-z_3^2-z_3  & \mathbf{F}[1,3] \\
  -z_1z_2-z_1z_3+z_2+z_3 & z_2+z_3  &  z_1z_2+z_1z_3  \\
  z_1   &       -z_1+z_2+z_3 &         -2z_1+z_2+z_3+1
  \end{bmatrix}, \]
 \end{small}
 where $\mathbf{F}[1,3] = -z_1z_2+z_1z_3+2z_1+z_2^2-z_2-z_3^2-2z_3-1$.

 It is easy to compute that $d_3(\mathbf{F})= (z_1-z_2)(z_2+z_3)^2$ and $d_2(\mathbf{F}) =z_2+z_3$. Let $d = z_1-z_2$, then $d \mid d_3(\mathbf{F})$ and ${\rm GCD}(d,d_2(\mathbf{F}))=1$. Hence, $\mathbf{F}\in \mathcal{S}$.

 $a_1= (z_1-z_2)(z_2+z_3)^2$ is the $3\times 3$ minor of $\mathbf{F}$, and extracting $d$ from $a_1$ yields $e_1=(z_2+z_3)^2$. It is easy to check that the reduced \gr basis of $\langle d, e_1 \rangle$ w.r.t. the lexicographic order is $\{z_1-z_2, (z_2+z_3)^2\}$, then $\mathbf{F}\notin \mathcal{S}_1$.

 Since the reduced \gr basis of $\langle d,d_2(\mathbf{F}) \rangle$ w.r.t. the lexicographic order is $\{z_1+z_3,z_2+z_3\}$, we have $\langle d, c_1,\ldots,c_9 \rangle \subseteq \langle d, d_2(\mathbf{F})\rangle \neq k[\z]$. Then, $\mathbf{F}\notin \mathcal{S}_2$.

 However, we can get a matrix factorization of $\mathbf{F}$ w.r.t. $d$:
 \begin{small}
  \[\mathbf{F} = \begin{bmatrix}
     d & 0 & -z_3 - 1 \\
     0 & 1 & 0 \\
     0 & 0 & 1
  \end{bmatrix}
  \begin{bmatrix}
     z_2 + z_3 & 0 & -z_2 - z_3 \\
     -z_1z_2-z_1z_3+z_2+z_3 & z_2+z_3  &  z_1z_2+z_1z_3  \\
     z_1   &       -z_1+z_2+z_3 &         -2z_1+z_2+z_3+1
  \end{bmatrix}.\]
 \end{small}
\end{example}

 In Example \ref{example-1}, we find that the reduced \gr basis of $\langle d, h_1,\ldots,h_9 \rangle$ w.r.t. the lexicographic order is $\{1\}$. In spire of it, we consider the following question.

\begin{question}\label{question-1}
 Let $\mathbf{F}\in \mathcal{S}$. If $\langle d,h_1,\ldots,h_\gamma \rangle = k[\z]$, does $\mathbf{F}$ have a matrix factorization w.r.t. $d$?
\end{question}

\section{Main Results} \label{sec3}

 Before giving the main theorem, we introduce two important lemmas.

\begin{lemma}[\cite{Lin2001Factorizations}] \label{lemma_zero}
 Let $g\in k[\z]$ and $f(\z_2)\in k[\z_2]$. If $g(f,z_2, \ldots,z_n)$ is a zero polynomial in $k[\z_2]$, then $(z_1 - f(\z_2))$ is a divisor of $g$.
\end{lemma}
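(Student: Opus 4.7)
The plan is to apply polynomial division to $g$, viewed as a univariate polynomial in $z_1$ with coefficients in the ring $k[\z_2]$. The divisor $z_1 - f(\z_2)$ is monic in $z_1$, so the division algorithm is valid over the coefficient ring $k[\z_2]$ and yields
$$g = (z_1 - f(\z_2)) \cdot q(\z) + r(\z_2),$$
with $q \in k[\z]$ and a remainder $r$ of $z_1$-degree strictly less than one, hence independent of $z_1$, so $r \in k[\z_2]$.

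Next I would substitute $z_1 = f(\z_2)$ on both sides. The left-hand side becomes $g(f(\z_2), z_2, \ldots, z_n)$, which is the zero polynomial by hypothesis, while the first term on the right-hand side vanishes because $z_1 - f(\z_2)$ becomes $0$ under this substitution. Consequently $r(\z_2) = 0$ in $k[\z_2]$, and therefore
$$g = (z_1 - f(\z_2)) \cdot q(\z),$$
so $(z_1 - f(\z_2)) \mid g$ as claimed.

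There is no serious obstacle here; the only point worth flagging is that one must use the \emph{monic} structure of $z_1 - f(\z_2)$ in $z_1$ to run the division algorithm within $k[\z_2][z_1] = k[\z]$. Passing instead to $k(\z_2)[z_1]$ would also produce a remainder, but would obscure the fact that the quotient $q$ automatically lies in $k[\z]$ rather than merely in $k(\z_2)[z_1]$.
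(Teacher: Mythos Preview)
Your proof is correct; this is the standard argument via polynomial division with remainder in $k[\z_2][z_1]$, exploiting that $z_1 - f(\z_2)$ is monic in $z_1$. Note that the paper does not actually supply its own proof of this lemma: it is stated with a citation to \cite{Lin2001Factorizations} and used as a known tool, so there is nothing to compare against beyond observing that your argument is the expected one.
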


 The following lemma is a generalization of Lemma 2 in \citep{Lin2001Factorizations}.

\begin{lemma} \label{lemma_new}
 Let $\mathbf{F}\in k[\z]^{l\times m}$ with ${\rm rank}(\mathbf{F})=l-1$. If $\langle h_1,\ldots,h_\gamma \rangle = k[\z]$, then there is a {\rm ZLP} vector $\mathbf{w}\in k[\z]^{1 \times l}$ such that $\mathbf{w}\mathbf{F} = \mathbf{0}_{1\times m}$.
\end{lemma}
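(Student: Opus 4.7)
The plan is to realize $\mathbf{w}$ as a primitive generator of the left kernel of $\mathbf{F}$ over $k[\z]$ and then to upgrade primitivity to the ZLP property using the reduced-minor hypothesis. Since ${\rm rank}(\mathbf{F})=l-1$, the left kernel of $\mathbf{F}$ inside $k(\z)^{1\times l}$ is one-dimensional over $k(\z)$, so any two polynomial kernel vectors are $k(\z)$-proportional.

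First I would construct an explicit supply of polynomial kernel vectors via a Cramer/Laplace trick. For each $(l-1)$-subset $J=\{j_1<\cdots<j_{l-1}\}$ of $\{1,\ldots,m\}$, set $\mathbf{v}_J:=(v_{J,1},\ldots,v_{J,l})\in k[\z]^{1\times l}$ with
$$v_{J,i}=(-1)^{i-1}\,{\rm det}\Bigl(\mathbf{F}\begin{pmatrix}\begin{smallmatrix}1\cdots\hat{\imath}\cdots l\\ j_1\cdots j_{l-1}\end{smallmatrix}\end{pmatrix}\Bigr).$$
For every $k\in\{1,\ldots,m\}$, the $k$-th entry of $\mathbf{v}_J\mathbf{F}$ is, up to sign, the $l\times l$ minor of $\mathbf{F}$ supported on columns $k,j_1,\ldots,j_{l-1}$; this vanishes by the rank hypothesis (trivially so when $k\in J$), hence $\mathbf{v}_J\mathbf{F}=\mathbf{0}_{1\times m}$. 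Since every entry of $\mathbf{v}_J$ is an $(l-1)\times(l-1)$ minor of $\mathbf{F}$, one may write $\mathbf{v}_J=d_{l-1}(\mathbf{F})\,\mathbf{u}_J$ with $\mathbf{u}_J\in k[\z]^{1\times l}$, and the entries of $\mathbf{u}_J$ are $(l-1)\times(l-1)$ reduced minors of $\mathbf{F}$. As $J$ ranges over all $\binom{m}{l-1}$ choices, the union of the entries of the vectors $\mathbf{u}_J$ is exactly the list $\pm h_1,\ldots,\pm h_\gamma$.

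Next I would extract a primitive generator: pick any $J$ for which $\mathbf{v}_J\neq\mathbf{0}$ (such a $J$ exists because ${\rm rank}(\mathbf{F})=l-1$), and let $\mathbf{w}$ be $\mathbf{v}_J$ divided by the GCD of its entries, so $\mathbf{w}\mathbf{F}=\mathbf{0}$ and the entries of $\mathbf{w}$ are setwise coprime. One-dimensionality of the $k(\z)$-kernel then yields that every polynomial kernel vector is a polynomial multiple of $\mathbf{w}$: writing $\mathbf{v}=(\alpha/\beta)\mathbf{w}$ with $\gcd(\alpha,\beta)=1$, the relation $\beta\mid\alpha w_i$ for every $i$ forces $\beta\mid\gcd(w_1,\ldots,w_l)=1$. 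Consequently each $\mathbf{u}_J=q_J\mathbf{w}$ for some $q_J\in k[\z]$, and the ideal generated by the entries of all the $\mathbf{u}_J$ factors as $\langle q_J:J\rangle\cdot\langle w_1,\ldots,w_l\rangle$. By construction the left-hand side equals $\langle h_1,\ldots,h_\gamma\rangle=k[\z]$, which forces $\langle w_1,\ldots,w_l\rangle=k[\z]$; that is, $\mathbf{w}$ is a ZLP vector with $\mathbf{w}\mathbf{F}=\mathbf{0}_{1\times m}$.

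The step I expect to be most delicate is the upgrade from ``entries have trivial GCD'' to ``entries generate the unit ideal''---for $n\geq 3$ these are genuinely different conditions, and the hypothesis $\langle h_1,\ldots,h_\gamma\rangle=k[\z]$ is exactly what bridges the gap via the ideal factorization $\langle q_J:J\rangle\cdot\langle w_1,\ldots,w_l\rangle$. Everything else is routine linear algebra together with the standard cofactor identity.
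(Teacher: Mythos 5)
Your argument is correct, and it reaches the conclusion by a genuinely different route than the paper. The paper obtains $\mathbf{w}$ from a $k[\z]$-linear dependence among the rows of $\mathbf{F}$ (after normalizing so that the first $l-1$ rows are independent) and then, for each full-column-rank $l\times(l-1)$ submatrix $\mathbf{F}_i$, invokes the reduced-minor duality results (Lemmas \ref{RM_relation-1} and \ref{RM_relation-2}) applied to the relation $w_l^{-1}\mathbf{w}_1=\mathbf{F}_{i2}\mathbf{F}_{i1}^{-1}$ to identify $w_j=\pm h_{ij}$, treating the rank-deficient submatrices separately; the reduced minors of $\mathbf{F}$ are then listed as $\bar{d}_iw_j$ and the unit-ideal hypothesis is transferred to $\langle w_1,\ldots,w_l\rangle$. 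You instead manufacture kernel vectors explicitly by the Laplace/cofactor identity, normalize one to a primitive $\mathbf{w}$, and use the one-dimensionality of the left kernel over $k(\z)$ together with the UFD property of $k[\z]$ to write every cofactor vector as $\mathbf{u}_J=q_J\mathbf{w}$, finishing with $\langle h_1,\ldots,h_\gamma\rangle=\langle q_J\rangle\cdot\langle w_1,\ldots,w_l\rangle\subseteq\langle w_1,\ldots,w_l\rangle$, which is a valid use of the product-ideal description. What your version buys is self-containedness: no appeal to Lemmas \ref{RM_relation-1}--\ref{RM_relation-2}, no choice of which rows are independent, and the degenerate submatrices (where the paper sets $c_{ij}=h_{ij}=0$ by hand) are absorbed automatically since $\mathbf{u}_J=\mathbf{0}$ there. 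What the paper's version buys is the explicit pointwise identification of the $w_j$ with the reduced minors of each $l\times(l-1)$ submatrix, in the form used by the earlier literature it builds on. Both proofs hinge on the same structural fact, namely that the $(l-1)\times(l-1)$ reduced minors of $\mathbf{F}$ are polynomial multiples of the entries of a primitive left kernel vector, so your $q_J$ play exactly the role of the paper's $\bar{d}_i$.
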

\begin{proof}
 In view of ${\rm rank}(\mathbf{F})=l-1$, we could assume that the first $(l-1)$ row vectors $\mathbf{f}_1,\ldots,\mathbf{f}_{l-1}$ of $\mathbf{F}$ are $k[\z]$-linearly independent. This implies that $\mathbf{f}_1,\ldots,\mathbf{f}_{l-1}$ and $\mathbf{f}_l$ are $k[\z]$-linearly dependent. Thus $\mathbf{w}\mathbf{F}=\mathbf{0}_{1\times m}$ for some nonzero row vector $\mathbf{w}=[w_1,\ldots,w_l]\in k[\z]^{1\times l}$, where $w_l \neq 0$ and ${\rm GCD}(w_1,\ldots,w_l) = 1$. Obviously, $w_1,\ldots,w_l$ are all the $1\times 1$ reduced minors of $\mathbf{w}$.

 The next thing is to prove that $w_1,\ldots,w_l$ generate $k[\z]$. Let $\mathbf{F}_1,\ldots,\mathbf{F}_\beta \in k[\z]^{l\times (l-1)}$ be all the $l\times (l-1)$ submatrices of $\mathbf{F}$, where $\beta=\binom m{l-1}$. For each $1\leq i \leq \beta$, let $c_{i1},\ldots,c_{il}$ and $h_{i1},\ldots,h_{il}$ be all the $(l-1)\times (l-1)$ minors and all the $(l-1)\times (l-1)$ reduced minors of $\mathbf{F}_i$ respectively, then $c_{ij} = d_{l-1}(\mathbf{F}_i)\cdot h_{ij}$, where $1\leq j \leq l$. Let $\mathbf{w}=[\mathbf{w}_1,w_l]$, where $\mathbf{w}_1=[w_1,\ldots,w_{l-1}]\in k[\z]^{1\times (l-1)}$. Let $\mathbf{F}_i=[\mathbf{F}_{i1}^{{\rm T}},-\mathbf{F}_{i2}^{{\rm T}}]^{{\rm T}}$, where $\mathbf{F}_{i1}\in k[\z]^{(l-1)\times (l-1)}$ and $\mathbf{F}_{i2}\in k[\z]^{1\times (l-1)}$. If $\mathbf{F}_i$ is not of full column rank, then $c_{ij} =  0$ and $h_{ij} = 0$, $j=1,\ldots,l$. If $\mathbf{F}_i$ is of full column rank, then it follows from $\mathbf{w}\mathbf{F}=\mathbf{0}_{1\times m}$ that
 \begin{equation}\label{lemma_new_equation_1}
  \begin{bmatrix}\mathbf{w}_1,w_l\end{bmatrix}\begin{bmatrix}\mathbf{F}_{i1} \\ -\mathbf{F}_{i2} \end{bmatrix}=\mathbf{0}_{1 \times (l-1)}.
 \end{equation}
 Since $w_l \neq 0$, ${\rm det}(\mathbf{F}_{i1})\neq 0$ by Lemma \ref{RM_relation-1}. From equation (\ref{lemma_new_equation_1}) we have
 \begin{equation}\label{lemma_new_equation_2}
   w_l^{-1}\mathbf{w}_1=\mathbf{F}_{i2}\mathbf{F}_{i1}^{-1}.
 \end{equation}
 According to Lemma \ref{RM_relation-2}, all the $1\times 1$ reduced minors of $\mathbf{w}$ are equal to all the $(l-1)\times (l-1)$ reduced minors of $\mathbf{F}_i$ without considering the sign, i.e., $w_j= h_{ij}$ for $j=1,\ldots,l$. Therefore, all the $(l-1)\times (l-1)$ minors of $\mathbf{F}$ are as follows:
 $$d_{l-1}(\mathbf{F}_1)\cdot w_1,\ldots,d_{l-1}(\mathbf{F}_1)\cdot w_l,
 \cdots,d_{l-1}(\mathbf{F}_\beta)\cdot w_1,\ldots,d_{l-1}(\mathbf{F}_\beta)\cdot w_l.$$
 Let $\bar{d}\in k[\z]$ be the GCD of $d_{l-1}(\mathbf{F}_1),\ldots,d_{l-1}(\mathbf{F}_\beta)$, then there exists $\bar{d}_i \in k[\z]$ such that $d_{l-1}(\mathbf{F}_i) = \bar{d}\cdot \bar{d}_i$, where $i=1,\ldots,\beta$. In the following we prove that the polynomials
  \[\begin{matrix}
   \bar{d}_1 w_1, & \bar{d}_1 w_2, & \cdots & \bar{d}_1 w_l, \\
   \bar{d}_2 w_1, & \bar{d}_2 w_2, & \cdots & \bar{d}_2 w_l, \\
   \vdots    &  \vdots  &    \ddots    &     \vdots       \\
   \bar{d}_\beta w_1, & \bar{d}_\beta w_2, &\cdots & \bar{d}_\beta w_l,
   \end{matrix}
  \]
  are all the $(l-1)\times (l-1)$ reduced minors of $\mathbf{F}$. It follows from ${\rm GCD}(w_1,\ldots,w_l) =1$ and ${\rm GCD}(\bar{d}_1,\cdots,\bar{d}_\beta)=1$ that
  \begin{equation*}
   \begin{split}
        & {\rm GCD}(\bar{d}_1w_1,\ldots,\bar{d}_1w_l,\cdots,
            \bar{d}_\beta w_1,\ldots,\bar{d}_\beta w_l) \\
      = & {\rm GCD}({\rm GCD}(\bar{d}_1w_1,\ldots,\bar{d}_1w_l),\cdots,
            {\rm GCD}(\bar{d}_\beta w_1,\ldots,\bar{d}_\beta w_l)) \\
      = & {\rm GCD}(\bar{d}_1,\cdots,\bar{d}_\beta) \\
      = & 1.
   \end{split}
  \end{equation*}
 Therefore, $\bar{d}_1w_1,\ldots,\bar{d}_1w_l,\cdots, \bar{d}_\beta w_1,\ldots,\bar{d}_\beta w_l$ are all the $(l-1)\times (l-1)$ reduced minors of $\mathbf{F}$, i.e., they are equal to $h_1,\ldots,h_\gamma$. Since $\langle h_1,\ldots,h_\gamma \rangle = k[\z]$, $w_1,\ldots,w_l$ generate $k[\z]$. 
\end{proof}

 Combining Lemma \ref{lemma_zero} and Lemma \ref{lemma_new}, we can answer Question \ref{question-1}.

\begin{theorem} \label{theorem_new_1}
 Let $\mathbf{F}\in \mathcal{S}$. If $\langle d,h_1,\ldots,h_\gamma \rangle = k[\z]$, then $\mathbf{F}$ admits a matrix factorization w.r.t. $d$.
\end{theorem}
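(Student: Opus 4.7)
My plan is to reduce the theorem to Lemma~\ref{lemma_new} applied to the evaluated matrix $\bar{\mathbf{F}} := \mathbf{F}(f(\z_2),\z_2) \in k[\z_2]^{l\times m}$, and then lift the resulting null vector back to $k[\z]$ using Lemma~\ref{lemma_zero} and Quillen--Suslin.

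First I would check that $\bar{\mathbf{F}}$ has rank exactly $l-1$ over $k(\z_2)$ and that its $(l-1)\times(l-1)$ reduced minors generate $k[\z_2]$. Rank at most $l-1$ follows since $d\mid d_l(\mathbf{F})$ forces every $l\times l$ minor of $\bar{\mathbf{F}}$ to vanish. For the lower bound, write $c_i = d_{l-1}(\mathbf{F}) h_i$; substituting $z_1 = f(\z_2)$ gives $\bar c_i = \bar d_{l-1}\cdot \bar h_i$, where $\bar d_{l-1} := d_{l-1}(\mathbf{F})(f(\z_2),\z_2)$ is nonzero because $\mathbf{F}\in\mathcal{S}$ forces ${\rm GCD}(d,d_{l-1}(\mathbf{F}))=1$. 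Evaluating the hypothesis $\langle d,h_1,\ldots,h_\gamma\rangle = k[\z]$ at $z_1=f(\z_2)$ yields $\langle \bar h_1,\ldots,\bar h_\gamma\rangle = k[\z_2]$, so in particular $\gcd(\bar h_1,\ldots,\bar h_\gamma)=1$. Hence $\gcd(\bar c_1,\ldots,\bar c_\gamma) = \bar d_{l-1}$, the reduced $(l-1)\times(l-1)$ minors of $\bar{\mathbf{F}}$ are precisely $\bar h_1,\ldots,\bar h_\gamma$, and they generate $k[\z_2]$; in particular some $\bar c_i\neq 0$, so ${\rm rank}(\bar{\mathbf{F}}) = l-1$.

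Next I would apply Lemma~\ref{lemma_new} over $k[\z_2]$ to obtain a ZLP vector $\bar{\mathbf{w}} \in k[\z_2]^{1\times l}$ with $\bar{\mathbf{w}}\,\bar{\mathbf{F}} = \mathbf{0}_{1\times m}$. Regard $\bar{\mathbf{w}}$ as a vector $\mathbf{w}\in k[\z]^{1\times l}$; then $\mathbf{w}\,\mathbf{F}\in k[\z]^{1\times m}$ vanishes identically at $z_1 = f(\z_2)$, so by Lemma~\ref{lemma_zero} every entry is divisible by $d$, giving $\mathbf{w}\,\mathbf{F} = d\,\mathbf{v}$ for some $\mathbf{v}\in k[\z]^{1\times m}$. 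Using Lemma~\ref{QS-2} on $\bar{\mathbf{w}}$ (over the polynomial ring $k[\z_2]$) I build a unimodular $\bar{\mathbf{U}} \in k[\z_2]^{l\times l}$ whose first row is $\bar{\mathbf{w}}$, and view it as a unimodular matrix $\mathbf{U}\in k[\z]^{l\times l}$. Then
\[
\mathbf{U}\mathbf{F} = \begin{bmatrix} d\,\mathbf{v} \\ \mathbf{R}\end{bmatrix}
= \begin{bmatrix} d & \mathbf{0} \\ \mathbf{0} & \mathbf{I}_{l-1}\end{bmatrix}
\begin{bmatrix} \mathbf{v} \\ \mathbf{R}\end{bmatrix},
\]
where $\mathbf{R}\in k[\z]^{(l-1)\times m}$ collects the remaining rows of $\mathbf{U}\mathbf{F}$. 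Setting
\[
\mathbf{G}_1 := \mathbf{U}^{-1}\begin{bmatrix} d & \mathbf{0} \\ \mathbf{0} & \mathbf{I}_{l-1}\end{bmatrix}\in k[\z]^{l\times l},
\qquad
\mathbf{F}_1 := \begin{bmatrix} \mathbf{v} \\ \mathbf{R}\end{bmatrix}\in k[\z]^{l\times m},
\]
we obtain $\mathbf{F} = \mathbf{G}_1 \mathbf{F}_1$ with ${\rm det}(\mathbf{G}_1) = \det(\mathbf{U}^{-1})\cdot d$, a nonzero scalar multiple of $d$; absorbing the scalar into one column of $\mathbf{G}_1$ and the corresponding row of $\mathbf{F}_1$ produces the desired factorization with ${\rm det}(\mathbf{G}_1) = d$.

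The main obstacle I anticipate is the bookkeeping around the evaluation $z_1\mapsto f(\z_2)$, specifically verifying that passing to $\bar{\mathbf{F}}$ preserves the ``reduced minors generate the unit ideal'' hypothesis. Every other step is essentially mechanical once Lemma~\ref{lemma_new} is available: the ring switch $k[\z_2]\hookrightarrow k[\z]$ is trivial because a ZLP row over $k[\z_2]$ remains ZLP over $k[\z]$, and the extension of $\bar{\mathbf{w}}$ to a unimodular matrix is handled by Quillen--Suslin. The coprimality ${\rm GCD}(d,d_{l-1}(\mathbf{F}))=1$ built into $\mathcal{S}$ is exactly what makes the reduction work.
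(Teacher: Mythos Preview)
Your proposal is correct and follows essentially the same route as the paper: evaluate at $z_1=f(\z_2)$, verify that $\bar{\mathbf{F}}$ has rank $l-1$ with $(l-1)\times(l-1)$ reduced minors generating $k[\z_2]$, apply Lemma~\ref{lemma_new} to obtain a ZLP null vector, extend it to a unimodular matrix via Lemma~\ref{QS-2}, and read off the factorization $\mathbf{F}=\mathbf{U}^{-1}\mathrm{diag}(d,1,\dots,1)\mathbf{F}_1$. Your treatment of the reduced-minor step (directly evaluating the unit-ideal identity) and of the scalar in $\det(\mathbf{G}_1)$ is in fact slightly cleaner than the paper's, but the overall strategy is identical.
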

\begin{proof}
 We divide our proof into three steps.

 First, let $\hat{\mathbf{F}} = \mathbf{F}(f(\z_2),\z_2) \in k[\z_2]^{l\times m}$, and we prove that ${\rm rank}(\hat{\mathbf{F}}) = l-1$. Let $\hat{a}_1,\ldots,\hat{a}_\eta \in k[\z_2]$ and $\hat{c}_1,\ldots,\hat{c}_\gamma \in k[\z_2]$ be all the $l\times l$ minors and all the $(l-1)\times (l-1)$ minors of $\hat{\mathbf{F}}$, respectively. Then, $\hat{a}_i = a_i(f(\z_2),\z_2)$ and $\hat{c}_j = c_j(f(\z_2),\z_2)$, where $1\leq i \leq \eta$ and $1\leq j \leq \gamma$. Since $\mathbf{F}\in \mathcal{S}$, we have $d \mid d_l(\mathbf{F})$ and ${\rm GCD}(d,d_{l-1}(\mathbf{F}))=1$. $d \mid d_l(\mathbf{F})$ implies that $\hat{a}_i = a_i(f(\z_2),\z_2) = 0~~(i =1,\ldots,\eta)$ and ${\rm rank}(\hat{\mathbf{F}}) \leq l-1$. If ${\rm rank}(\hat{\mathbf{F}}) < l-1$, then $c_j(f(\z_2),\z_2) = \hat{c}_j =0~~(j =1,\ldots,\gamma)$. It follows from Lemma \ref{lemma_zero} that $d$ is a common divisor of $c_1,\ldots,c_\gamma$, then $d \mid d_{l-1}(\mathbf{F})$, which contradicts ${\rm GCD}(d,d_{l-1}(\mathbf{F}))=1$. Therefore, ${\rm rank}(\hat{\mathbf{F}}) = l-1$.

 Second, we prove that all the $(l-1)\times (l-1)$ reduced minors of $\hat{\mathbf{F}}$ generate $k[\z_2]$. Let $\bar{h}\in k[\z_2]$ be the GCD of $h_1(f(\z_2),\z_2),\ldots,h_\gamma(f(\z_2),\z_2)$, then for each $1\leq j \leq \gamma$ there exits $\hat{h}_j\in k[\z_2]$ such that $h_j(f(\z_2),\z_2)=\bar{h}\cdot \hat{h}_j$, and ${\rm GCD}(\hat{h}_1,\ldots, \hat{h}_\gamma)=1$. Let $g = d_{l-1}(\mathbf{F})$, then it follows from $\hat{c}_j = g(f(\z_2),\z_2)\cdot h_j(f(\z_2),\z_2)$ that $d_{l-1}(\hat{\mathbf{F}}) = g(f(\z_2),\z_2)\cdot \bar{h}$, and $\hat{h} _1,\ldots,\hat{h}_\gamma$ are all the $(l-1)\times (l-1)$ reduced minors of $\hat{\mathbf{F}}$. Assume that $\langle \hat{h} _1,\ldots,\hat{h}_\gamma \rangle \neq k[\z_2]$, then there exists a point $(\alpha_2,\ldots,\alpha_n)\in k^{1\times (n-1)}$ such that $\hat{h}_j(\alpha_2,\ldots,\alpha_n)=0$, where $j =1, \ldots, \gamma$. Let $\alpha_1 = f(\alpha_2,\ldots,\alpha_n)$, then for each $j$ we have $h_j(\alpha_1,\alpha_2,\ldots,\alpha_n) =\bar{h}(\alpha_2,\ldots,\alpha_n)
 \cdot \hat{h}_j(\alpha_2,\ldots,\alpha_n) = 0$. This implies that $(\alpha_1,\alpha_2,\ldots,\alpha_n)\in k^{1\times n}$ is a common zero of $d,h_1,\ldots,h_\gamma$, which contradicts the fact that $\langle d,h_1,\ldots,h_\gamma \rangle = k[\z]$.

 Finally, we remark that $\mathbf{F}$ has a matrix factorization w.r.t. $d$. Using Lemma \ref{lemma_new}, we get $\mathbf{w}\hat{\mathbf{F}}= \mathbf{0}_{1\times m}$, where $\mathbf{w}\in k[\z_2]^{1 \times l}$ is a ZLP vector. Meanwhile, according to Lemma \ref{QS-2}, a unimodular matrix $\mathbf{U} \in k[\z_2]^{l \times l}$ can be constructed such that $\mathbf{w}$ is its first row. Let $\mathbf{F}_0 = \mathbf{U}\mathbf{F}$, then the first row of $\mathbf{F}_0(f(\z_2),\z_2)= \mathbf{U}\hat{\mathbf{F}}$ is a zero vector. By Lemma \ref{lemma_zero}, $d$ is a common divisor of the polynomials in the first row of $\mathbf{F}_0$, thus
 \[\mathbf{F}_0=\mathbf{U}\mathbf{F} = \mathbf{D} \mathbf{F}_1=\begin{bmatrix}
     d  &   &   &  &       &  &  &   \\
          &   & 1 &  &       &  &  &   \\
          &   &   &  & \ddots&  &  &   \\
          &   &   &  &       &  &  & 1
   \end{bmatrix}
   \begin{bmatrix}
     \bar{f}_{11}  & &  & & \cdots & &  & & \bar{f}_{1m}  \\
     f_{21}  & &  & & \cdots & &  & & f_{2m}  \\
     \vdots  & &  & & \ddots & &  & & \vdots   \\
      f_{l1} & &  & & \cdots & &  & & f_{lm}
   \end{bmatrix}.
 \]
 Consequently, we can now derive the matrix factorization of $\mathbf{F}$ w.r.t. $d$:
 $$\mathbf{F}= \mathbf{G}_1\mathbf{F}_1,$$
 where $\mathbf{G}_1= \mathbf{U}^{-1}\mathbf{D} \in k[\z]^{l\times l}$, $\mathbf{F}_1\in k[\z]^{l\times m}$ and ${\rm det}(\mathbf{G}_1)=d$. 
\end{proof}

\begin{remark}\label{theorem_new_1_remark}
 In Theorem \ref{theorem_new_1}, we have that ${\rm rank}(\hat{\mathbf{F}}) = l-1$ and $\langle \hat{h} _1,\ldots,\hat{h}_\gamma \rangle = k[\z_2]$. Hence, Theorem \ref{theorem_new_1} is a generalization of Lemma \ref{theorem_Liu}.
\end{remark}

 According to Theorem \ref{theorem_new_1}, we construct a set of multivariate polynomial matrices:
 $$\mathcal{S}_3 := \{\mathbf{F}\in \mathcal{S} : \langle d,h_1,\ldots,h_\gamma \rangle = k[\z]\}.$$
 Then, $\mathcal{S}_2 \subset \mathcal{S}_3 \subset \mathcal{S}$ and $\mathbf{F}\in \mathcal{S}_3$ admits a matrix factorization w.r.t. $d$. Example \ref{example-1} in Section \ref{sec2-2} shows that $S_2 \subsetneqq S_3$.

 \vspace{4mm}
 Let $\mathbf{F}\in k[\z]^{l\times m}$, and $d_0 = \prod_{t=1}^{s}(z_1 - f_t(\z_2))$ be a divisor of $d_l(\mathbf{F})$, where $f_1(\z_2),\ldots,f_s(\z_2)\in k[\z_2]$. \cite{Liu2011On} proved that if $\langle d_0,c_1,\ldots,c_\gamma \rangle = k[\z]$, then $\mathbf{F}$ admits a matrix factorization w.r.t. $d_0$. It would be interesting to know whether Theorem \ref{theorem_new_1} can be generalized to the case with $t>1$. Without loss of generality, we consider the case of $t=2$.

\begin{theorem}\label{theorem_new_2}
 Let $\mathbf{F}\in k[\z]^{l\times m}$ and $d_0 = (z_1 - f_1(\z_2))(z_1 - f_2(\z_2))$ be a divisor of $d_l(\mathbf{F})$. If ${\rm GCD}(d_0,d_{l-1}(\mathbf{F}))=1$ and $\langle d_0,h_1,\ldots,h_\gamma \rangle = k[\z]$, then $\mathbf{F}$ admits a matrix factorization w.r.t. $d_0$.
\end{theorem}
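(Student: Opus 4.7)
The plan is to peel off the linear factors $d_1 := z_1 - f_1(\z_2)$ and $d_2 := z_1 - f_2(\z_2)$ successively by two applications of Theorem \ref{theorem_new_1}. First, I would verify that $\mathbf{F}$ satisfies the hypotheses of Theorem \ref{theorem_new_1} with respect to $d_1$: certainly $d_1 \mid d_0 \mid d_l(\mathbf{F})$; the quantity $\gcd(d_1, d_{l-1}(\mathbf{F}))$ divides $\gcd(d_0, d_{l-1}(\mathbf{F})) = 1$; and since $d_0 \in \langle d_1 \rangle$, $\langle d_1, h_1, \ldots, h_\gamma \rangle \supseteq \langle d_0, h_1, \ldots, h_\gamma \rangle = k[\z]$. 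Theorem \ref{theorem_new_1} then yields a factorization $\mathbf{F} = \mathbf{G}\mathbf{F}'$ with $\det(\mathbf{G}) = d_1$; inspecting its proof, $\mathbf{G} = \mathbf{U}^{-1}\mathbf{D}$ with $\mathbf{D} = \operatorname{diag}(d_1, 1, \ldots, 1)$ and $\mathbf{U}$ unimodular over $k[\z_2]$.

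The main obstacle will be verifying that $\mathbf{F}'$ again satisfies the hypotheses of Theorem \ref{theorem_new_1}, this time with respect to $d_2$. Equation (\ref{binet-cauchy-equation-2}) gives $d_l(\mathbf{F}) = d_1 \cdot d_l(\mathbf{F}')$ up to a unit, so $d_2 \mid d_l(\mathbf{F}')$, using that $d_1, d_2$ are coprime when $f_1 \neq f_2$; the degenerate case $f_1 = f_2$ with $d_0 = d_1^2$ is handled in the same way. To compare $d_{l-1}(\mathbf{F}')$ with $d_{l-1}(\mathbf{F})$, I would use that $\mathbf{U}$ is unimodular: by Binet-Cauchy the ideals of $(l-1)\times(l-1)$ minors of $\mathbf{U}\mathbf{F}$ and of $\mathbf{F}$ coincide, hence so do their GCDs up to units. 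Since $\mathbf{U}\mathbf{F} = \mathbf{D}\mathbf{F}'$ differs from $\mathbf{F}'$ only in that its first row is multiplied by $d_1$, splitting the $(l-1)\times(l-1)$ minors of $\mathbf{U}\mathbf{F}$ according to whether they involve the first row yields $d_{l-1}(\mathbf{U}\mathbf{F}) = d_{l-1}(\mathbf{F}') \cdot v$ with $v \mid d_1$. The hypothesis $\gcd(d_1, d_{l-1}(\mathbf{F})) = 1$ together with irreducibility of $d_1$ forces $v$ to be a unit, so $d_{l-1}(\mathbf{F}') = d_{l-1}(\mathbf{F})$ up to a unit, and in particular $\gcd(d_2, d_{l-1}(\mathbf{F}')) = 1$.

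For the reduced-minor ideal condition, equation (\ref{binet-cauchy-equation-1}) with $r = l-1$ expresses each $c_i$ as a $k[\z]$-linear combination of the $(l-1)\times(l-1)$ minors $c_j'$ of $\mathbf{F}'$; dividing through by $d_{l-1}(\mathbf{F}) = d_{l-1}(\mathbf{F}')$ realizes each $h_i$ as a $k[\z]$-linear combination of the $h_j'$, so $\langle h_1, \ldots, h_\gamma \rangle \subseteq \langle h_1', \ldots, h_\gamma' \rangle$. Since $d_0 \in \langle d_2 \rangle$, this gives $\langle d_2, h_1', \ldots, h_\gamma' \rangle \supseteq \langle d_0, h_1, \ldots, h_\gamma \rangle = k[\z]$. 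A second application of Theorem \ref{theorem_new_1} then produces $\mathbf{F}' = \mathbf{G}'\mathbf{F}_1$ with $\det(\mathbf{G}') = d_2$, and setting $\mathbf{G}_1 := \mathbf{G}\mathbf{G}'$ delivers the desired factorization $\mathbf{F} = \mathbf{G}_1 \mathbf{F}_1$ with $\det(\mathbf{G}_1) = d_1 d_2 = d_0$.
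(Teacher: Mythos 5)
Your proposal is correct and follows essentially the same route as the paper: peel off $d_1$ via Theorem \ref{theorem_new_1}, show $d_{l-1}(\mathbf{F}_1)=d_{l-1}(\mathbf{F})$ (so $\gcd(d_2,d_{l-1}(\mathbf{F}_1))=1$), transfer the reduced-minor condition to $\mathbf{F}_1$ by Binet--Cauchy, and apply the theorem again. Your only deviations are cosmetic: you argue the ideal condition by direct containment rather than the paper's common-zero argument, and you track the extra factor $v\mid d_1$ instead of exhibiting a first-row-avoiding minor not divisible by $d_1$.
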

\begin{proof}
 Let $d_1 = z_1 - f_1(\z_2)$ and $d_2 = z_1 - f_2(\z_2)$. Obviously, ${\rm GCD}(d_1,d_{l-1}(\mathbf{F}))=1$ and $\langle d_1,h_1,\ldots,h_\gamma \rangle = k[\z]$. By Theorem \ref{theorem_new_1}, there exist $\mathbf{G}_1\in k[\z]^{l\times l}$ and $\mathbf{F}_1\in k[\z]^{l\times m}$ such that
 $$\mathbf{F} = \mathbf{G}_1\mathbf{F}_1,$$
 where $\mathbf{G}_1= \mathbf{U}_1^{-1}\mathbf{D}_1$, ${\rm det}(\mathbf{G}_1)=d_1$, $\mathbf{U}_1\in k[\z_2]^{l\times l}$ is a unimodular matrix and $\mathbf{D}_1={\rm diag}(d_1,1,\ldots,1)$. According the Equation (\ref{binet-cauchy-equation-2}) in Lemma \ref{binet-cauchy}, $d_2 = z_1 - f_2(\z_2)$ is a divisor of $d_l(\mathbf{F}_1)$. Next we prove that $\mathbf{F}_1$ admits a matrix factorization w.r.t. $d_2$.

 We first prove that ${\rm GCD}(d_2, d_{l-1}(\mathbf{F}_1))=1$. Otherwise, it follows from $d_2$ is an irreducible polynomial that ${\rm GCD}(d_2, d_{l-1}(\mathbf{F}_1))=d_2$. Then $d_{l-1}(\mathbf{F}_1) \mid d_{l-1}(\mathbf{F})$ implies that $d_2 \mid d_{l-1}(\mathbf{F})$, which contradicts the condition ${\rm GCD}(d,d_{l-1}(\mathbf{F}))=1$. Second, we prove that $d_2$ and all the $(l-1)\times (l-1)$ reduced minors of $\mathbf{F}_1$ generate the unit ideal $k[\z]$.

 Let $\mathbf{F}_{i1}\in k[\z]^{(l-1)\times m}$ be a submatrix obtained by removing the $i$-th row of $\mathbf{F}_1$, and $\bar{c}_{i1},\ldots,\bar{c}_{i \beta}$ be all the $(l-1)\times (l-1)$ minors of $\mathbf{F}_{i1}$, where $i=1,\ldots,l$. Then, $\bar{c}_{11},\ldots,\bar{c}_{1 \beta},\ldots,\bar{c}_{l1}, \ldots,\bar{c}_{l\beta}$ are all the $(l-1)\times (l-1)$ minors of $\mathbf{F}_1$. Extracting $d_{l-1}(\mathbf{F}_1)$ from $\bar{c}_{ij}$ yields $\bar{c}_ {ij} = d_{l-1}(\mathbf{F}_1)\cdot \bar{h}_{ij}$, then $\bar{h}_{11},\ldots,\bar{h}_{1 \beta},\ldots,\bar{h}_{l1}, \ldots,\bar{h}_{l\beta}$ are all the $(l-1)\times (l-1)$ reduced minors of $\mathbf{F}_1$. Hence, we only need to prove that $\langle d_2,\bar{h}_{11},\ldots,\bar{h}_{l \beta} \rangle = k[\z]$.

 Since $\mathbf{D}_1={\rm diag}(d_1,1,\ldots,1)$, all the $(l-1)\times (l-1)$ minors of $\mathbf{D}_1 \mathbf{F}_1$ are
 $$\bar{c}_{11},\ldots,\bar{c}_{1 \beta},d_1\bar{c}_{21}, \ldots,d_1\bar{c}_{2\beta},\ldots,d_1\bar{c}_{l1}, \ldots,d_1\bar{c}_{l\beta}.$$
 Obviously, there is at least one integer $j\in \{1,\ldots,\beta\}$ such that $d_1 \nmid \bar{c}_{1j}$. Otherwise, $d_1 \mid d_{l-1}(\mathbf{D}_1 \mathbf{F}_1)$. It follows form $\mathbf{F}=\mathbf{U}_1^{-1}\mathbf{D}_1 \mathbf{F}_1$ and the Equation (\ref{binet-cauchy-equation-1}) in Lemma \ref{binet-cauchy} that $d_{l-1}(\mathbf{D}_1 \mathbf{F}_1) \mid d_{l-1}(\mathbf{F})$. So $d_1 \mid d_{l-1}(\mathbf{F})$, which leads to a contradiction. Since $d_1 = z_1 - f_1(\z_2)$ is an irreducible polynomial, we have
 \begin{equation*}
   \begin{split}
        & {\rm GCD}(\bar{c}_{11},\ldots,\bar{c}_{1 \beta},d_1\bar{c}_{21}, \ldots,d_1\bar{c}_{2\beta},\ldots,d_1\bar{c}_{l1}, \ldots,d_1\bar{c}_{l\beta}) \\
    =   & {\rm GCD}(\bar{c}_{11},\ldots,\bar{c}_{1 \beta},\bar{c}_{21}, \ldots,\bar{c}_{2\beta},\ldots,\bar{c}_{l1}, \ldots,\bar{c}_{l\beta}),
   \end{split}
 \end{equation*}
 Therefore, $d_{l-1}(\mathbf{D}_1 \mathbf{F}_1) = d_ {l-1}(\mathbf{F}_1)$. It follows from $\mathbf{U}_1\mathbf{F}=\mathbf{D}_1 \mathbf{F}_1$ that $d_ {l-1}(\mathbf{F}) \mid d_{l-1}(\mathbf{D}_1 \mathbf{F}_1)$ and $d_{l-1}(\mathbf{F}) = d_ {l-1}(\mathbf{F}_1)$. The Equation (\ref{binet-cauchy-equation-1}) in Lemma \ref{binet-cauchy} implies that each $c_i$ is a $k[\z]$-linear combination of $\bar{c}_{11},\ldots,\bar{c}_{l\beta}$, where $i=1,\ldots,\gamma$. Since $d_{l-1}(\mathbf{F}) = d_ {l-1}(\mathbf{F}_1)$, we can obtain that each $h_i$ $(1\leq i \leq \gamma)$ is a $k[\z]$-linear combination of $\bar{h}_{11},\ldots,\bar{h}_{l\beta}$. By $\langle d_0,h_1,\ldots,h_\gamma \rangle = k[\z]$, $\langle d_2,h_1,\ldots,h_\gamma \rangle = k[\z]$. If $\langle d_2,\bar{h}_{11},\ldots,\bar{h}_{l \beta} \rangle \neq k[\z]$, then there exits a point $(\alpha_1,\ldots,\alpha_n)\in k^{1\times n}$ such that $\bar{h}_{ij}(\alpha_1,\ldots,\alpha_n) = 0$ for each $i$ and $j$, where $\alpha_1 = f_2(\alpha_2,\ldots,\alpha_n)$. This implies that $(\alpha_1,\ldots,\alpha_n)$ is a common zero of $d_2,h_1,\ldots,h_\gamma$, which leads to a contradiction.

 According to Theorem \ref{theorem_new_1} again, there exits $\mathbf{G}_2\in k[\z]^{l\times l}$ and $\mathbf{F}_2\in k[\z]^{l\times m}$ such that $\mathbf{F}_1=\mathbf{G}_2\mathbf{F}_2$, where $\mathbf{G}_2= \mathbf{U}_2^{-1}\mathbf{D}_2$, ${\rm det}(\mathbf{G}_2)=d_2$, $\mathbf{U}_2\in k[\z_2]^{l\times l}$ is an unimodular matrix and $\mathbf{D}_2={\rm diag}(d_2,1,\ldots,1)$.

 Finally, we can get a matrix factorization of $\mathbf{F}$ w.r.t. $d_0$:
 $$\mathbf{F} = \mathbf{G}_0\mathbf{F}_2,$$
 where $\mathbf{G}_0 = \mathbf{G}_1\mathbf{G}_2 \in k[\z]^{l\times l}$, and ${\rm det}(\mathbf{G}_0)=d_0 = (z_1 - f_1(\z_2))(z_1 - f_2(\z_2))$. 
\end{proof}

\begin{remark}\label{theorem_new_2_remark}
 In the above theorem, we can factorize $\mathbf{F}_1$ w.r.t. $d_2$ without checking whether ${\rm GCD}(d_2,d_{l-1}(\mathbf{F}_1))=1$ and the ideal generated by $d_2$ and all the $(l-1)\times (l-1)$ reduced minors of $\mathbf{F}_1$ is $k[\z]$, which can help us improve the computational efficiency of matrix factorizations.
\end{remark}

 It is worth noting that if $f_1(\z_2) = f_2(\z_2)$ in Theorem \ref{theorem_new_2},  we have the following corollary.

\begin{corollary}  \label{new_cor_1}
 Let $\mathbf{F}\in k[\z]^{l\times m}$ and $d_0 = (z_1 - f_1(\z_2))^r$ be a divisor of $d_l(\mathbf{F})$. If ${\rm GCD}(d_0,d_{l-1}(\mathbf{F}))=1$ and $\langle d_0,h_1,\ldots,h_\gamma \rangle = k[\z]$, then $\mathbf{F}$ admits a matrix factorization w.r.t. $d_0$.
\end{corollary}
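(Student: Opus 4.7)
The plan is to prove Corollary \ref{new_cor_1} by induction on $r$, using Theorem \ref{theorem_new_1} both as the base case and as the engine for one peeling-off step, and reusing the technical facts already established inside the proof of Theorem \ref{theorem_new_2}. Write $d_1 = z_1 - f_1(\z_2)$, so that $d_0 = d_1^r$. For $r=1$ the statement is exactly Theorem \ref{theorem_new_1}, so assume $r\geq 2$ and that the corollary holds for the exponent $r-1$.

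First I would verify that $\mathbf{F}$ itself fulfils the hypotheses of Theorem \ref{theorem_new_1} with respect to $d_1$: the divisibility $d_1 \mid d_l(\mathbf{F})$ is immediate from $d_1 \mid d_0$; since $d_1$ is irreducible, any common factor of $d_1$ with $d_{l-1}(\mathbf{F})$ would already divide $d_0$ and contradict ${\rm GCD}(d_0,d_{l-1}(\mathbf{F}))=1$, so ${\rm GCD}(d_1,d_{l-1}(\mathbf{F}))=1$; and $\langle d_1,h_1,\ldots,h_\gamma \rangle \supseteq \langle d_0,h_1,\ldots,h_\gamma \rangle = k[\z]$. Theorem \ref{theorem_new_1} then produces a factorization $\mathbf{F} = \mathbf{G}_1\mathbf{F}_1$ with ${\rm det}(\mathbf{G}_1)=d_1$.

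Next I would show that $\mathbf{F}_1$ satisfies the hypotheses of the corollary with $d_0$ replaced by $d_1^{r-1}$. From the Binet-Cauchy identity (\ref{binet-cauchy-equation-2}) one has $d_l(\mathbf{F}) = d_1 \cdot d_l(\mathbf{F}_1)$, so $d_1^{r-1} \mid d_l(\mathbf{F}_1)$. The argument already spelled out in the proof of Theorem \ref{theorem_new_2} gives the key invariance $d_{l-1}(\mathbf{F}_1) = d_{l-1}(\mathbf{F})$, whence $d_1$ is coprime to $d_{l-1}(\mathbf{F}_1)$ and therefore so is $d_1^{r-1}$ (again using irreducibility of $d_1$). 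For the ideal condition, observe that $\langle d_1^r \rangle \subseteq \langle d_1^{r-1} \rangle$ forces $\langle d_1^{r-1},h_1,\ldots,h_\gamma \rangle = k[\z]$; coupling this with the fact, also extracted from the proof of Theorem \ref{theorem_new_2}, that each $h_i$ is a $k[\z]$-linear combination of the $(l-1)\times(l-1)$ reduced minors $\bar{h}_{11},\ldots,\bar{h}_{l\beta}$ of $\mathbf{F}_1$, I conclude $\langle d_1^{r-1},\bar{h}_{11},\ldots,\bar{h}_{l\beta} \rangle = k[\z]$.

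The inductive hypothesis applied to $\mathbf{F}_1$ and $d_1^{r-1}$ now yields $\mathbf{F}_1 = \mathbf{G}'\mathbf{F}'$ with ${\rm det}(\mathbf{G}')=d_1^{r-1}$, and composing gives $\mathbf{F} = (\mathbf{G}_1\mathbf{G}')\mathbf{F}'$ with ${\rm det}(\mathbf{G}_1\mathbf{G}')=d_1\cdot d_1^{r-1}=d_0$, completing the induction. The only genuinely delicate point is the third inductive-step verification, namely that the ideal $\langle d_1^{r-1},\bar{h}_{ij}\rangle$ is still the whole ring after the first factor is removed; this rests entirely on the equality $d_{l-1}(\mathbf{F}_1)=d_{l-1}(\mathbf{F})$, which in turn relies on $\mathbf{U}_1$ being a unimodular matrix over $k[\z_2]$ (hence carrying no $d_1$-divisibility into the minors). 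Everything else is bookkeeping that transfers verbatim from the proof of Theorem \ref{theorem_new_2}.
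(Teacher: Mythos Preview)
Your proof is correct and follows essentially the same route the paper intends: the paper derives Corollary \ref{new_cor_1} by specializing the argument of Theorem \ref{theorem_new_2} to the case $f_1(\z_2)=f_2(\z_2)$ and then iterating, and you have made this induction on $r$ explicit, reusing exactly the two key facts from that proof (the invariance $d_{l-1}(\mathbf{F}_1)=d_{l-1}(\mathbf{F})$ and the expression of each $h_i$ as a $k[\z]$-combination of the $\bar{h}_{ij}$).
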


 Further, if $f_1(\z_2) \neq f_2(\z_2)$ in Theorem \ref{theorem_new_2}, we have another corollary.

\begin{corollary} \label{new_cor_2}
 Let $\mathbf{F}\in k[\z]^{l\times m}$ and $d_0 = \prod_{t=1}^{s}(z_1 - f_t(\z_2))^{q_t}$ be a divisor of $d_l(\mathbf{F})$. If ${\rm GCD}(d_0,d_{l-1}(\mathbf{F}))$ $=1$ and $\langle d_0,h_1,\ldots,h_\gamma \rangle = k[\z]$, then $\mathbf{F}$ admits a matrix factorization w.r.t. $d_0$.
\end{corollary}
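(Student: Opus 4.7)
The plan is to prove the corollary by induction on the total degree $N = \sum_{t=1}^{s} q_t$ of $d_0$ in $z_1$, peeling off one irreducible factor of the form $z_1 - f_t(\z_2)$ at a time. The base case $N = 1$ is exactly Theorem \ref{theorem_new_1}, and for $N \geq 2$ the inductive step amounts to re-running the argument that was already carried out in the proof of Theorem \ref{theorem_new_2}, now in the more general setting where several copies of the same linear factor may occur.

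Concretely, given $d_0$ with $N \geq 2$, I would pick any irreducible factor $d_1 = z_1 - f_{t_0}(\z_2)$ appearing in $d_0$ and write $d_0 = d_1 \cdot d_0'$. Since $d_1 \mid d_0$, the hypothesis ${\rm GCD}(d_1, d_{l-1}(\mathbf{F})) = 1$ is immediate, and by the Nullstellensatz argument used at the end of the proof of Theorem \ref{theorem_new_2} (any common zero of $d_1$ and the $h_i$ would be a common zero of $d_0$ and the $h_i$), we also get $\langle d_1, h_1, \ldots, h_\gamma \rangle = k[\z]$. Theorem \ref{theorem_new_1} then produces $\mathbf{F} = \mathbf{G}_1 \mathbf{F}_1$ with $\mathbf{G}_1 = \mathbf{U}_1^{-1} \mathbf{D}_1$, $\mathbf{D}_1 = {\rm diag}(d_1, 1, \ldots, 1)$, and ${\rm det}(\mathbf{G}_1) = d_1$.

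The crux is then to verify that the pair $(\mathbf{F}_1, d_0')$ satisfies the hypotheses of the corollary, so that induction applies. The divisibility $d_0' \mid d_l(\mathbf{F}_1)$ follows from equation (\ref{binet-cauchy-equation-2}) in Lemma \ref{binet-cauchy}. The identity $d_{l-1}(\mathbf{F}_1) = d_{l-1}(\mathbf{F})$ established inside the proof of Theorem \ref{theorem_new_2} yields ${\rm GCD}(d_0', d_{l-1}(\mathbf{F}_1)) = 1$. Finally, from $\langle d_0, h_1, \ldots, h_\gamma \rangle = k[\z]$ one deduces $\langle d_0', h_1, \ldots, h_\gamma \rangle = k[\z]$ by the same Nullstellensatz argument as before, and then the fact (also from the proof of Theorem \ref{theorem_new_2}) that each $h_i$ is a $k[\z]$-linear combination of the reduced minors $\bar{h}_{ij}$ of $\mathbf{F}_1$ upgrades this to $\langle d_0', \bar{h}_{11}, \ldots, \bar{h}_{l\beta} \rangle = k[\z]$. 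By the inductive hypothesis $\mathbf{F}_1 = \mathbf{G}_1' \mathbf{F}_2$ with ${\rm det}(\mathbf{G}_1') = d_0'$, and hence $\mathbf{F} = (\mathbf{G}_1 \mathbf{G}_1') \mathbf{F}_2$ is the required factorization of $\mathbf{F}$ with respect to $d_0$.

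The main obstacle is not any single computation but rather the careful transfer of all three hypotheses (divisibility by $d_l$, coprimality with $d_{l-1}$, and the unit-ideal condition with the $(l-1)\times(l-1)$ reduced minors) from $\mathbf{F}$ to the quotient factor $\mathbf{F}_1$ at each stage. All three transfers are already present in the proof of Theorem \ref{theorem_new_2}, which handled the case $s=2$, $q_1 = q_2 = 1$; crucially, none of those arguments used the distinctness of the two factors or the restriction that the multiplicities be $1$, so the induction carries through with no change in substance and simultaneously recovers Corollary \ref{new_cor_1}.
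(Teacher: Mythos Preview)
Your proposal is correct and is precisely the argument the paper leaves implicit: the paper states Corollary~\ref{new_cor_2} without proof, as an immediate consequence of iterating the proof of Theorem~\ref{theorem_new_2}, and your induction on $N=\sum_t q_t$ (peeling off one linear factor at a time and transferring the three hypotheses via the identities $d_{l-1}(\mathbf{F}_1)=d_{l-1}(\mathbf{F})$ and $h_i\in\langle \bar{h}_{11},\ldots,\bar{h}_{l\beta}\rangle$) is exactly that iteration. Your observation that nothing in the proof of Theorem~\ref{theorem_new_2} uses the distinctness of $f_1,f_2$ or the multiplicity restriction is the correct justification for why the induction goes through unchanged.
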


 Let $f^{(i)}(\z)$ be a polynomial in $k[z_1,\ldots,z_{i-1},z_{i+1},
 \ldots,z_n]$, where $1 \leq i \leq n$. Similarly, we can get the following corollaries.

\begin{corollary}  \label{new_cor_3}
 Let $\mathbf{F}\in k[\z]^{l\times m}$ and $d_0 = (z_i - f^{(i)}(\z))^r$ be a divisor of $d_l(\mathbf{F})$. If ${\rm GCD}(d_0,d_{l-1}(\mathbf{F}))=1$ and $\langle d_0,h_1,\ldots,h_\gamma \rangle = k[\z]$, then $\mathbf{F}$ admits a matrix factorization w.r.t. $d_0$.
\end{corollary}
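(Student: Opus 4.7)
The plan is to reduce Corollary \ref{new_cor_3} to Corollary \ref{new_cor_1} by means of a symmetry argument in the variables, exploiting the fact that the statement of Corollary \ref{new_cor_1} is naturally ``about $z_1$'' only because of a labelling convention, not for any intrinsic algebraic reason.

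First, I would introduce the ring automorphism $\sigma \colon k[\mathbf{z}] \to k[\mathbf{z}]$ that swaps $z_1$ and $z_i$ and fixes all other variables, extended entrywise to matrices. Because $f^{(i)}(\mathbf{z})$ lies in $k[z_1,\ldots,z_{i-1},z_{i+1},\ldots,z_n]$ (i.e.\ it does not involve $z_i$), its image $\tilde{f}(\mathbf{z}_2) := \sigma(f^{(i)})$ lies in $k[z_2,\ldots,z_n] = k[\mathbf{z}_2]$. Consequently, writing $\tilde{\mathbf{F}} := \sigma(\mathbf{F})$, the polynomial $\sigma(d_0) = (z_1 - \tilde{f}(\mathbf{z}_2))^r$ is a divisor of $d_l(\tilde{\mathbf{F}})$ of exactly the shape required by Corollary \ref{new_cor_1}.

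Next I would check that each hypothesis transfers under $\sigma$. Since $\sigma$ is a $k$-algebra automorphism of $k[\mathbf{z}]$, it permutes minors of a matrix, commutes with taking GCDs (which are defined up to units in the UFD $k[\mathbf{z}]$, and $\sigma$ maps units to units), and it sends the ideal $\langle d_0, h_1, \ldots, h_\gamma\rangle$ to $\langle \sigma(d_0), \sigma(h_1), \ldots, \sigma(h_\gamma)\rangle$, where $\sigma(h_1), \ldots, \sigma(h_\gamma)$ are exactly all the $(l-1)\times (l-1)$ reduced minors of $\tilde{\mathbf{F}}$. Thus ${\rm GCD}(\sigma(d_0), d_{l-1}(\tilde{\mathbf{F}})) = \sigma({\rm GCD}(d_0, d_{l-1}(\mathbf{F}))) = 1$ and $\langle \sigma(d_0), \sigma(h_1), \ldots, \sigma(h_\gamma)\rangle = \sigma(k[\mathbf{z}]) = k[\mathbf{z}]$. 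Hence $\tilde{\mathbf{F}}$ satisfies the hypotheses of Corollary \ref{new_cor_1}.

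Applying Corollary \ref{new_cor_1} yields $\tilde{\mathbf{F}} = \tilde{\mathbf{G}}_1 \tilde{\mathbf{F}}_1$ with $\tilde{\mathbf{G}}_1 \in k[\mathbf{z}]^{l\times l}$, $\tilde{\mathbf{F}}_1 \in k[\mathbf{z}]^{l\times m}$ and ${\rm det}(\tilde{\mathbf{G}}_1) = (z_1 - \tilde{f}(\mathbf{z}_2))^r$. Pulling back by $\sigma^{-1} = \sigma$ gives $\mathbf{F} = \sigma(\tilde{\mathbf{G}}_1)\, \sigma(\tilde{\mathbf{F}}_1)$ with ${\rm det}(\sigma(\tilde{\mathbf{G}}_1)) = \sigma({\rm det}(\tilde{\mathbf{G}}_1)) = (z_i - f^{(i)}(\mathbf{z}))^r = d_0$, which is the desired factorization. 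There is no real obstacle here beyond verifying that the variable-swap automorphism is compatible with every construction used in Corollary \ref{new_cor_1} (minors, reduced minors, GCDs, ideal membership); the slight conceptual point worth flagging in the write-up is that the intermediate factor $\tilde{\mathbf{U}}_1 \in k[\mathbf{z}_2]^{l\times l}$ from the proof of Theorem \ref{theorem_new_1} corresponds, after pulling back by $\sigma$, to a unimodular matrix whose entries lie in $k[z_1,\ldots,z_{i-1},z_{i+1},\ldots,z_n]$, consistent with the symmetric role played by $z_i$ in the statement.
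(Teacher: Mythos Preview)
Your proposal is correct and matches the paper's approach, which gives no explicit proof at all: the paper merely writes ``Similarly, we can get the following corollaries'' before stating Corollary \ref{new_cor_3}. Your use of the variable-swap automorphism $\sigma$ is precisely the rigorous way to cash out that word ``similarly,'' and every step you list (transport of minors, reduced minors, GCDs, and ideal membership under a $k$-algebra automorphism) is sound.
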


\begin{corollary} \label{new_cor_4}
 Let $\mathbf{F}\in k[\z]^{l\times m}$ and $d_0 = \prod_{i=1}^{n} \prod_{t=1}^{s_i}(z_i - f^{(i)}_t(\z))^{q_{it}}$ be a divisor of $d_l(\mathbf{F})$. If ${\rm GCD}(d_0,d_{l-1}(\mathbf{F}))=1$ and $\langle d_0,h_1,\ldots,h_\gamma \rangle = k[\z]$, then $\mathbf{F}$ admits a matrix factorization w.r.t. $d_0$.
\end{corollary}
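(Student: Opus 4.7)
The plan is to reduce Corollary \ref{new_cor_4} to the (symmetry-extended) versions of Corollary \ref{new_cor_2} and Corollary \ref{new_cor_3}, iterating on the variable index $i$. Write
\[
d_0 \;=\; \prod_{i=1}^{n} d_{0,i},\qquad d_{0,i} \;:=\; \prod_{t=1}^{s_i}\bigl(z_i - f^{(i)}_t(\z)\bigr)^{q_{it}},
\]
and observe that each $d_{0,i}$ involves only factors that are irreducible in $k[\z]$, and that $d_{0,1},\ldots,d_{0,n}$ are pairwise coprime (since $(z_i - f^{(i)}_t(\z))$ is linear in $z_i$ with non-zero $z_i$-coefficient). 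First, I note that Corollary \ref{new_cor_2}, whose proof nowhere uses that the distinguished variable is $z_1$, extends verbatim to each $z_i$: that is the content of Corollary \ref{new_cor_3} together with the pattern of the proof of Theorem \ref{theorem_new_2}, yielding a ``$z_i$-version'' of Corollary \ref{new_cor_2} applicable to $d_{0,i}$.

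I would then proceed by induction on $i$. At the start of step $i$, I have a factorization $\mathbf{F} = \mathbf{H}_{i-1}\,\mathbf{E}_{i-1}$ with $\mathbf{H}_{i-1}\in k[\z]^{l\times l}$, $\mathbf{E}_{i-1}\in k[\z]^{l\times m}$, and $\det(\mathbf{H}_{i-1}) = d_{0,1}\cdots d_{0,i-1}$ (for $i=1$, take $\mathbf{H}_0 = \mathbf{I}_l$, $\mathbf{E}_0 = \mathbf{F}$). I need to verify three hypotheses of the $z_i$-version of Corollary \ref{new_cor_2} applied to $\mathbf{E}_{i-1}$ and $d_{0,i}$: (a) $d_{0,i}\mid d_l(\mathbf{E}_{i-1})$; (b) ${\rm GCD}(d_{0,i},d_{l-1}(\mathbf{E}_{i-1}))=1$; and (c) $\langle d_{0,i},\bar{h}_1,\ldots,\bar{h}_{\bar\gamma}\rangle = k[\z]$, where $\bar{h}_j$ are the $(l-1)\times(l-1)$ reduced minors of $\mathbf{E}_{i-1}$. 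Granted these, the $z_i$-version of Corollary \ref{new_cor_2} yields $\mathbf{E}_{i-1} = \mathbf{G}_i\,\mathbf{E}_i$ with $\det(\mathbf{G}_i) = d_{0,i}$, so setting $\mathbf{H}_i := \mathbf{H}_{i-1}\mathbf{G}_i$ advances the induction. After $n$ steps we obtain $\mathbf{F} = \mathbf{H}_n\mathbf{E}_n$ with $\det(\mathbf{H}_n) = d_0$, as required.

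For step (a), Equation (\ref{binet-cauchy-equation-2}) of Lemma \ref{binet-cauchy} gives $d_l(\mathbf{F}) = \det(\mathbf{H}_{i-1})\cdot d_l(\mathbf{E}_{i-1})$; since $d_{0,i}$ divides $d_0 \mid d_l(\mathbf{F})$ and is coprime to $\det(\mathbf{H}_{i-1}) = d_{0,1}\cdots d_{0,i-1}$, it must divide $d_l(\mathbf{E}_{i-1})$. For step (b), I would replicate the argument in the proof of Theorem \ref{theorem_new_2}: Equation (\ref{binet-cauchy-equation-1}) of Lemma \ref{binet-cauchy} applied to $\mathbf{F} = \mathbf{H}_{i-1}\mathbf{E}_{i-1}$ shows $d_{l-1}(\mathbf{E}_{i-1})\mid d_{l-1}(\mathbf{F})$ after absorbing the factor from $\det(\mathbf{H}_{i-1})$ appropriately, so if some irreducible factor $(z_i - f^{(i)}_t(\z))$ of $d_{0,i}$ divided $d_{l-1}(\mathbf{E}_{i-1})$, it would divide $d_{l-1}(\mathbf{F})$, contradicting ${\rm GCD}(d_0, d_{l-1}(\mathbf{F}))=1$.

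The main obstacle is step (c), the preservation of the unit-ideal condition on reduced minors. Following the calculation in the proof of Theorem \ref{theorem_new_2} (the passage showing $d_{l-1}(\mathbf{D}_1\mathbf{F}_1) = d_{l-1}(\mathbf{F}_1)$ and $d_{l-1}(\mathbf{F}) = d_{l-1}(\mathbf{F}_1)$), one shows inductively that $d_{l-1}(\mathbf{F}) = d_{l-1}(\mathbf{E}_{i-1})$ and that each $h_j$ (a reduced minor of $\mathbf{F}$) is a $k[\z]$-linear combination of the $\bar{h}_k$ (reduced minors of $\mathbf{E}_{i-1}$). Consequently
\[
\langle d_{0,i}, h_1,\ldots,h_\gamma\rangle \;\subseteq\; \langle d_{0,i}, \bar{h}_1,\ldots,\bar{h}_{\bar\gamma}\rangle,
\]
and since $\langle d_0, h_1,\ldots,h_\gamma\rangle = k[\z]$ implies $\langle d_{0,i}, h_1,\ldots,h_\gamma\rangle = k[\z]$ (common zero argument: a common zero of $d_{0,i}$ and the $h_j$ would lie on the zero set of $d_0$ as well), the right-hand side is also $k[\z]$. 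This closes the induction, and composing all the factorizations produces $\mathbf{F} = \mathbf{H}_n\mathbf{E}_n$ with $\det(\mathbf{H}_n) = d_0$.
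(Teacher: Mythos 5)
Your proof is correct and follows essentially the route the paper intends: the corollary is stated there as following ``similarly'' by iterating the single-factor peeling of Theorem \ref{theorem_new_2} (built on Theorem \ref{theorem_new_1}) and checking that divisibility of $d_l$, the GCD condition on $d_{l-1}$, and the unit-ideal condition on the $(l-1)\times(l-1)$ reduced minors are inherited by each intermediate factor, which is exactly your induction (your step (c) uses the same $d_{l-1}$-preservation and linear-combination argument as the paper, and your Nullstellensatz reductions are the same). One small slip: the blocks $d_{0,1},\ldots,d_{0,n}$ need not be pairwise coprime (e.g.\ $z_1-z_2$ and $z_2-z_1$ are associates), but this is harmless, since your step (a) follows directly by cancelling $\det(\mathbf{H}_{i-1})$ from $d_l(\mathbf{F})=\det(\mathbf{H}_{i-1})\,d_l(\mathbf{E}_{i-1})$ in the UFD $k[\z]$ rather than by coprimality.
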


\section{Algorithm and Example} \label{sec4}

 According to Theorem \ref{theorem_new_1}, we get the following algorithm for computing a matrix factorization of $\mathbf{F}\in \mathcal{S}_3$ w.r.t. $d$.

%  \vskip -14 pt

 \begin{algorithm}[!htb]
 \DontPrintSemicolon
 \SetAlgoSkip{}
% \SetAlgoNoLine
 \LinesNumbered
 \SetKwInOut{Input}{Input}
 \SetKwInOut{Output}{Output}

 \Input{$\mathbf{F}\in \mathcal{S}_3$.}

 \Output{a matrix factorization of $\mathbf{F}$ w.r.t. $d$.}

 \Begin{

   compute a ZLP vector $\mathbf{w}\in k[\z_2]^{1 \times l}$ such that $\mathbf{w}\mathbf{F}(f(\z_2),\z_2)=\mathbf{0}_{1\times m}$;

   construct a unimodular matrix $\mathbf{U}\in k[\z_2]^{l \times l} $ such that $\mathbf{w}$ is its first row;

   compute $\mathbf{F}_1\in k[\z]^{l\times m}$ such that $\mathbf{U}\mathbf{F}=\mathbf{D} \mathbf{F}_1$, where $\mathbf{D}={\rm diag}(d,1,\ldots,1)$;

   {\bf return} $\mathbf{F} = \mathbf{G}_1\mathbf{F}_1$, where $\mathbf{G}_1= \mathbf{U}^{-1}\mathbf{D}$ and ${\rm det}(\mathbf{G}_1)=d$.
 }
 \caption{Matrix Factorization Algorithm}
 \label{MF_Algorithm}
 \end{algorithm}

 % \vskip -16 pt

 In the following, we show how to compute $\mathbf{w}$ and $\mathbf{U}$ in Algorithm \ref{MF_Algorithm}. Let $\hat{\mathbf{F}} = \mathbf{F}(f(\z_2),\z_2) \in k[\z_2]^{l\times m}$ and ${\rm Syz}_l(\hat{\mathbf{F}})$ be the left syzygy module of $\hat{\mathbf{F}}$, i.e., ${\rm Syz}_l(\hat{\mathbf{F}})=\{\mathbf{p}\in k[\z_2]^{1\times l}\mid \mathbf{p}\hat{\mathbf{F}}=\mathbf{0}_{1\times m}\}$. Since ${\rm rank}(\hat{\mathbf{F}}) = l-1$, we have ${\rm rank}({\rm Syz}_l(\hat{\mathbf{F}})) = 1$. Then, we compute a reduced \gr basis of ${\rm Syz}_l(\hat{\mathbf{F}})$ w.r.t. a term order, and select a nonzero vector from the \gr basis. Let $\mathbf{w}_1=[w_{11},\ldots,w_{1l}]\in k[\z_2]^{1 \times l}$ be the nonzero vector, and $w\in k[\z_2]$ be the GCD of $w_{11},\ldots,w_{1l}$, then $\mathbf{w} = \frac{\mathbf{w}_1}{w}$.

 Since $\mathbf{w}$ is a ZLP vector, there exists a column vector $\mathbf{q}_1 \in k[\z_2]^{l\times 1}$ such that $\mathbf{w}\mathbf{q}_1 =1$. This calculation problem is equivalent to a lifting homomorphism problem in \citep{Decker2006Computing} (see Problem 4.1, page 129), and the command ``lift" of the computer algebra system {\em Singular} in \citep{DGPS2016} can help us compute $\mathbf{q}_1$. Let ${\rm Syz}_r(\mathbf{w})=\{\mathbf{q}\in k[\z_2]^{l\times 1}\mid \mathbf{w}\mathbf{q}=0\}$, then ${\rm Syz}_r(\mathbf{w})$ is a free module with ${\rm rank}({\rm Syz}_r(\mathbf{w}))=(l-1)$ by Lemma \ref{QS-1}. Let $\mathbf{q}_2,\ldots, \mathbf{q}_l\in k[\z_2]^{l\times 1}$ be a free basis of ${\rm Syz}_r(\mathbf{w})$, then $\mathbf{V} = [\mathbf{q}_1,\mathbf{q}_2,\ldots, \mathbf{q}_l] \in k[\z_2]^{l\times l}$ is a unimodular matrix and $\mathbf{U} = \mathbf{V}^{-1}$ is one that we want by Theorem 4.4 in \citep{Lu2017}.

 Now, we use an example to illustrate the calculation process of Algorithm \ref{MF_Algorithm}. We return to Example \ref{example-1}, and let $\mathbf{F}$ be the same matrix in Example \ref{example-1}.

\begin{example}\label{example-1-1}
 Let
 \begin{small}
  \[\mathbf{F}=\begin{bmatrix}
  z_1z_2-z_1-z_2^2-z_2z_3  &  z_1z_3+z_1-z_2z_3-z_2-z_3^2-z_3  & \mathbf{F}[1,3] \\
  -z_1z_2-z_1z_3+z_2+z_3 & z_2+z_3  &  z_1z_2+z_1z_3  \\
  z_1   &       -z_1+z_2+z_3 &         -2z_1+z_2+z_3+1
  \end{bmatrix}, \]
 \end{small}
 where $\mathbf{F}[1,3] = -z_1z_2+z_1z_3+2z_1+z_2^2-z_2-z_3^2-2z_3-1$.

 As already noted in Example \ref{example-1}, $\langle d, h_1,\ldots,h_9 \rangle = k[z_1,z_2,z_3]$ implies that $\mathbf{F}\in \mathcal{S}_3$. Then, we can use Algorithm \ref{MF_Algorithm} to factorize $\mathbf{F}$ w.r.t. $d$.

\vspace{1mm}
 {\bf Step 1:} Let $\hat{\mathbf{F}} = \mathbf{F}(z_2,z_2,z_3) \in k[z_2,z_3]^{3\times 3}$, we compute a ZLP vector $\mathbf{w}\in k[z_2,z_3]^{1 \times 3}$ such that $\mathbf{w}\hat{\mathbf{F}} = \mathbf{0}_{1\times 3}$, where
 \[\hat{\mathbf{F}} =\begin{bmatrix}
   -z_2(z_3+1) & -z_3(z_3+1) & -(z_3-z_2+1)(z_3+1) \\
   (1-z_2)(z_2+z_3) & z_2+z_3 & z_2(z_2+z_3) \\
   z_2 & z_3 & z_3-z_2+1
 \end{bmatrix}. \]
 We use {\em Singular} command ``syz" to compute a reduced \gr basis of ${\rm Syz}_l(\hat{\mathbf{F}})$ w.r.t. the lexicographic order, and obtain $\mathbf{w} =[1,0,z_3+1]$.

\vspace{1mm}
 {\bf Step 2:} Construct a unimodular matrix $\mathbf{U}\in k[z_2,z_3]^{3 \times 3} $ such that $\mathbf{w}$ is its first row. According to the instruction of the construction for unimodular matrix $\mathbf{U}$ below Algorithm \ref{MF_Algorithm}, we divide it into three small steps.

  Step 2.1: Using {\em Singular} command ``lift" to compute $\mathbf{q}_1 \in k[z_2,z_3]^{3\times 1}$ such that $\mathbf{w}\mathbf{q}_1 =1$, we get $\mathbf{q}_1 = [1,0,0]^{\rm T}$.

 Step 2.2: Using QuillenSuslin package to compute a free basis of ${\rm Syz}_r(\mathbf{w})$, we have $\mathbf{q}_2=[0,1,0]^{\rm T}$ and $\mathbf{q}_3=[-(z_3+1),0,1]^{\rm T}$.

 Step 2.3: Let $\mathbf{V}=[\mathbf{q}_1,\mathbf{q}_2,\mathbf{q}_3]$, then
 \[\mathbf{U} = \mathbf{V}^{-1} =\begin{bmatrix}
   1 & 0 & z_3+1 \\
   0 & 1 & 0 \\
   0 & 0 & 1
 \end{bmatrix}. \]

\vspace{1mm}
 {\bf Step 3.} Extracting $d$ from the first row of $\mathbf{U}\mathbf{F}$, we get $\mathbf{U}\mathbf{F} =\mathbf{D} \mathbf{F}_1$, where $\mathbf{D}={\rm diag}(d,1,1)$ and
 \[\mathbf{F}_1 = \begin{bmatrix}
  z_2+z_3  &  0  & -z_2-z_3 \\
  -z_1z_2-z_1z_3+z_2+z_3 & z_2+z_3  &  z_1z_2+z_1z_3  \\
  z_1   &       -z_1+z_2+z_3 &         -2z_1+z_2+z_3+1
 \end{bmatrix}. \]
 Then, we obtain a matrix factorization of $\mathbf{F}$ w.r.t. $d$:
 \begin{small}
  \[\mathbf{F} = \mathbf{G}\mathbf{F}_1 = \begin{bmatrix}
     d & 0 & -z_3 - 1 \\
     0 & 1 & 0 \\
     0 & 0 & 1
  \end{bmatrix}
  \begin{bmatrix}
     z_2 + z_3 & 0 & -z_2 - z_3 \\
     -z_1z_2-z_1z_3+z_2+z_3 & z_2+z_3  &  z_1z_2+z_1z_3  \\
     z_1   &       -z_1+z_2+z_3 &         -2z_1+z_2+z_3+1
  \end{bmatrix},\]
 \end{small}
 where $\mathbf{G}=\mathbf{U}^{-1}\mathbf{D}$ and ${\rm det}(\mathbf{G}_1) = d = z_1-z_2$.

\end{example}

\section{Conclusions} \label{sec_conclusions}

 We have studied the problem of matrix factorizations for multivariate polynomial matrices in $\mathcal{S}$, and the results presented in this paper greatly extend those of \citep{Lin2001Factorizations,Liu2011On}. The matrix factorizations for an arbitrary multivariate polynomial matrix remains a challenging and an important open problem. Although the new results can only deal with the class of multivariate polynomial matrices discussed in $\mathcal{S}$, we hope that the new results will motivate new progress in this important research topic.

\vspace{10pt}
\noindent \textbf{Acknowledgements\ }
This research was supported by the CAS Project QYZDJ-SSW-SYS022.

%\vspace{10pt}
%
%\noindent {\bf References}

\bibliographystyle{elsarticle-harv}

\bibliography{mf} 

\begin{thebibliography}{38}
\expandafter\ifx\csname natexlab\endcsname\relax\def\natexlab#1{#1}\fi
\expandafter\ifx\csname url\endcsname\relax
  \def\url#1{\texttt{#1}}\fi
\expandafter\ifx\csname urlprefix\endcsname\relax\def\urlprefix{URL }\fi

\bibitem[{Bose(1982)}]{Bose1982}
Bose, N., 1982. {\em Applied multidimensional systems theory}. Van Nostrand
  Reinhold, New York.

\bibitem[{Bose et~al.(2003)Bose, Buchberger, and Guiver}]{Bose2003}
Bose, N., Buchberger, B., Guiver, J., 2003. {\em Multidimensional systems
  theory and applications}. Dordrecht, The Netherlands: Kluwer.

\bibitem[{Buchberger(1965)}]{buchberger1965algorithmus}
Buchberger, B., 1965. {\em Ein Algorithmus zum Auffinden der Basiselemente des
  Restklassenrings nach einem nulldimensionalen Polynomideal}. Ph.D. thesis,
  Universitat Innsbruck, Austria.

\bibitem[{Charoenlarpnopparut and
  Bose(1999)}]{Charoenlarpnopparut1999Multidimensional}
Charoenlarpnopparut, C., Bose, N., 1999. Multidimensional {FIR} filter bank
  design using \gr bases. {\em IEEE Transactions on Circuits and Systems II:
  Analog. Digit. Signal Processing} 46~(12), 1475--1486.

\bibitem[{Cox et~al.(2007)Cox, Little, and O'shea}]{Cox2007Ideals}
Cox, D., Little, J., O'shea, D., 2007. {\em Ideals, varieties, and algorithms}.
  Undergraduate Texts in Mathematics. Springer, New York, third edition.

\bibitem[{Decker et~al.(2016)Decker, Greuel, Pfister, and
  Schoenemann}]{DGPS2016}
Decker, W., Greuel, G.-M., Pfister, G., Schoenemann, H., 2016. {\em {SINGULAR}
  4.0.3. a computer algebra system for polynomial computations}, {FB}
  {M}athematik der {U}niversitaet, {D}-67653 {K}aiserslautern.
  \url{https://www.singular.uni-kl.de/}.

\bibitem[{Decker and Lossen(2006)}]{Decker2006Computing}
Decker, W., Lossen, C., 2006. {\em Computing in Algebraic Geometry}. Algorithms
  and Computation in Mathematics. Springer Berlin Heidelberg.

\bibitem[{Fabia\'{n}ska and Quadrat(2006)}]{Fabianska2006Applications}
Fabia\'{n}ska, A., Quadrat, A., 2006. Applications of the {Q}ullen-{S}uslin
  theorem to multidimensional systems theory. {\em In \gr Bases in Control
  Theory and Signal Processing}, edited by Park, H. and Regensburger, G.,
  Walter de Gruyter, Berlin., 23--106.

\bibitem[{Fabia\'{n}ska and Quadrat(2007)}]{QS-program}
Fabia\'{n}ska, A., Quadrat, A., 2007. {\em A Maple implementation of a
  constructive version of the {Q}uillen-{S}uslin Theorem}.
  \url{https://wwwb.math.rwth-aachen.de/QuillenSuslin/}.

\bibitem[{Guan et~al.(2018)Guan, Li, and Ouyang}]{Guan2018}
Guan, J., Li, W., Ouyang, B., 2018. On rank factorizations and factor prime
  factorizations for multivariate polynomial matrices. {\em Journal of Systems
  Science and Complexity} 31~(6), 1647--1658.

\bibitem[{Guan et~al.(2019)Guan, Li, and Ouyang}]{Guan2019}
Guan, J., Li, W., Ouyang, B., 2019. On minor prime factorizations for
  multivariate polynomial matrices. {\em Multidimensional Systems and Signal
  Processing} 30, 493--502.

\bibitem[{Guiver and Bose(1982)}]{Guiver1982Polynomial}
Guiver, J., Bose, N., 1982. Polynomial matrix primitive factorization over
  arbitrary coefficient field and related results. {\em IEEE Transactions on
  Circuits and Systems} 29~(10), 649--657.

\bibitem[{Lin(1988)}]{Lin1988On}
Lin, Z., 1988. On matrix fraction descriptions of multivariable linear n-{D}
  systems. {\em IEEE Transactions on Circuits and Systems} 35~(10), 1317--1322.

\bibitem[{Lin(1993)}]{Lin1993On}
Lin, Z., 1993. On primitive factorizations for n-{D} polynomial matrices. In:
  {\em IEEE International Symposium on Circuits and Systems}. pp. 601--618.

\bibitem[{Lin(1999{\natexlab{a}})}]{Lin1999Notes}
Lin, Z., 1999{\natexlab{a}}. Notes on n-{D} polynomial matrix factorizations.
  {\em Multidimensional Systems and Signal Processing} 10~(4), 379--393.

\bibitem[{Lin(1999{\natexlab{b}})}]{Lin1999On}
Lin, Z., 1999{\natexlab{b}}. On syzygy modules for polynomial matrices. {\em
  Linear Algebra and Its Applications} 298~(1-3), 73--86.

\bibitem[{Lin(2001)}]{Lin2001Further}
Lin, Z., 2001. Further results on n-{D} polynomial matrix factorizations. {\em
  Multidimensional Systems and Signal Processing} 12~(2), 199--208.

\bibitem[{Lin and Bose(2001)}]{Lin2001A}
Lin, Z., Bose, N., 2001. A generalization of {S}erre's conjecture and some
  related issues. {\em Linear Algebra and Its Applications} 338~(1), 125--138.

\bibitem[{Lin et~al.(2001)Lin, Ying, and Xu}]{Lin2001Factorizations}
Lin, Z., Ying, J., Xu, L., 2001. Factorizations for n-{D} polynomial matrices.
  {\em Circuits, Systems, and Signal Processing} 20~(6), 601--618.

\bibitem[{Liu et~al.(2011)Liu, Li, and Wang}]{Liu2011On}
Liu, J., Li, D., Wang, M., 2011. On general factorizations for n-{D} polynomial
  matrices. {\em Circuits Systems and Signal Processing} 30~(3), 553--566.

\bibitem[{Liu et~al.(2014)Liu, Li, and Zheng}]{Liu2014The}
Liu, J., Li, D., Zheng, L., 2014. The {L}in-{B}ose problem. {\em IEEE
  Transactions on Circuits and Systems II Express Briefs} 61~(1), 41--43.

\bibitem[{Liu and Wang(2010)}]{Liu2010Notes}
Liu, J., Wang, M., 2010. Notes on factor prime factorizations for n-{D}
  polynomial matrices. {\em Multidimensional Systems and Signal Processing}
  21~(1), 87--97.

\bibitem[{Liu and Wang(2013)}]{Liu2013New}
Liu, J., Wang, M., 2013. New results on multivariate polynomial matrix
  factorizations. {\em Linear Algebra and Its Applications} 438~(1), 87--95.

\bibitem[{Liu and Wang(2015)}]{Liu2015Further}
Liu, J., Wang, M., 2015. Further remarks on multivariate polynomial matrix
  factorizations. {\em Linear Algebra and Its Applications} 465~(465),
  204--213.

\bibitem[{Logar and Sturmfels(1992)}]{Logar1992Algorithms}
Logar, A., Sturmfels, B., 1992. Algorithms for the {Q}uillen-{S}uslin theorem.
  {\em Journal of Algebra} 145~(1), 231--239.

\bibitem[{Lu et~al.(2017)Lu, Ma, and Wang}]{Lu2017}
Lu, D., Ma, X., Wang, D., 2017. A new algorithm for general factorizations of
  multivariate polynomial matrices. In: {\em proceedings of International
  Symposium on Symbolic and Algebraic Computation}. pp. 277--284.

\bibitem[{Morf et~al.(1977)Morf, Levy, and Kung}]{Morf1977New}
Morf, M., Levy, B., Kung, S., 1977. New results in 2-{D} systems theory, part
  i: 2-{D} polynomial matrices, factorization, and coprimeness. {\em
  Proceedings of the IEEE} 64~(6), 861--872.

\bibitem[{Park(1995)}]{park1995}
Park, H., 1995. {\em A computational theory of {L}aurent polynomial rings and
  multidimensional {FIR} systems}. Ph.D. thesis, University of California at
  Berkeley.

\bibitem[{Pommaret(2001)}]{Pommaret2001Solving}
Pommaret, J., 2001. Solving {B}ose conjecture on linear multidimensional
  systems. In: {\em European Control Conference}. IEEE, Porto, Portugal, pp.
  1653--1655.

\bibitem[{Quillen(1976)}]{Quillen1976Projective}
Quillen, D., 1976. Projective modules over polynomial rings. {\em Inventiones
  mathematicae} 36~(1), 167--171.

\bibitem[{Serre(1955)}]{serre1955faisceaux}
Serre, J., 1955. Faisceaux alg{\'e}briques coh{\'e}rents. {\em Annals of
  Mathematics} 61~(2), 197--278.

\bibitem[{Strang(2010)}]{Strang2010Linear}
Strang, G., 2010. {\em Linear algebra and its applications}. Academic Press.

\bibitem[{Suslin(1976)}]{Suslin1976Projective}
Suslin, A., 1976. Projective modules over polynomial rings are free. {\em
  Soviet Math. Dokl.} 17, 1160--1164.

\bibitem[{Wang(2007)}]{Mingsheng2007On}
Wang, M., 2007. On factor prime factorization for n-{D} polynomial matrices.
  {\em IEEE Transactions on Circuits and Systems} 54~(6), 1398--1405.

\bibitem[{Wang and Feng(2004)}]{Wang2004On}
Wang, M., Feng, D., 2004. On {L}in-{B}ose problem. {\em Linear Algebra and Its
  Applications} 390~(1), 279--285.

\bibitem[{Wang and Kwong(2005)}]{Mingsheng2005On}
Wang, M., Kwong, C., 2005. On multivariate polynomial matrix factorization
  problems. {\em Mathematics of Control, Signals, and Systems} 17~(4),
  297--311.

\bibitem[{Youla and Gnavi(1979)}]{Youla1979Notes}
Youla, D., Gnavi, G., 1979. Notes on n-dimensional system theory. {\em IEEE
  Transactions on Circuits and Systems} 26~(2), 105--111.

\bibitem[{Youla and Pickel(1984)}]{Youla1984The}
Youla, D., Pickel, P., 1984. The {Q}uillen-{S}uslin theorem and the structure
  of n-dimensional elementary polynomial matrices. {\em IEEE Transactions on
  Circuits and Systems} 31~(6), 513--518.

\end{thebibliography}

\end{document}